\newtheorem{theorem}{Theorem}
\newtheorem{corollary}{Corollary}
\DeclarePairedDelimiter{\ceil}{\lceil}{\rceil}
\newcommand\copyrighttext{%
  \footnotesize \textcopyright 2020 IEEE. Personal use of this material is permitted.  Permission from IEEE must be obtained for all other uses, in any current or future media, including reprinting/republishing this material for advertising or promotional purposes, creating new collective works, for resale or redistribution to servers or lists, or reuse of any copyrighted component of this work in other works.}
\newcommand\copyrightnotice{%
\begin{tikzpicture}[remember picture,overlay]
\node[anchor=south,yshift=0pt] at (current page.south) {\fbox{\parbox{\dimexpr\textwidth-\fboxsep-\fboxrule\relax}{\copyrighttext}}};
\end{tikzpicture}%
}
\begin{document}

\title{Vehicle Redistribution in Ride-Sourcing Markets using Convex Minimum Cost Flows}

\author{Renos~Karamanis, Eleftherios~Anastasiadis, Marc~Stettler
        and~Panagiotis~Angeloudis
\thanks{R. Karamanis, E. Anastasiadis, M. Stettler and P. Angeloudis are with the Department
of Civil and Environmental Engineering, Imperial College London,
UK, e-mail: renos.karamanis10@imperial.ac.uk.}
}

\markboth{}%
{Karamanis \MakeLowercase{\textit{et al.}}: Vehicle Redistribution using Convex Minimum Cost Flows}

\IEEEtitleabstractindextext{
\begin{abstract}
Ride-sourcing platforms often face imbalances in the demand and supply of rides across areas in their operating road-networks. As such, dynamic pricing methods have been used to mediate these demand asymmetries through surge price multipliers, thus incentivising higher driver participation in the market. However, the anticipated commercialisation of autonomous vehicles could transform the current ride-sourcing platforms to fleet operators. The absence of human drivers fosters the need for empty vehicle management to address any vehicle supply deficiencies. Proactive redistribution using integer programming and demand predictive models have been proposed in research to address this problem. A shortcoming of existing models, however, is that they ignore the market structure and underlying customer choice behaviour. As such, current models do not capture the real value of redistribution. To resolve this, we formulate the vehicle redistribution problem as a non-linear minimum cost flow problem which accounts for the relationship of supply and demand of rides, by assuming a customer discrete choice model and a market structure. We demonstrate that this model can have a convex domain, and we introduce an edge splitting algorithm to solve a transformed convex minimum cost flow problem for vehicle redistribution. By testing our model using simulation, we show that our redistribution algorithm can decrease wait times by more than 50\%, increase profit up to 10\% with less than 20\% increase in vehicle mileage. Our findings outline that the value of redistribution is contingent on localised market structure and customer behaviour.
\end{abstract}
\begin{IEEEkeywords}
Ride-Sourcing, Vehicle Redistribution, Network Optimisation
\end{IEEEkeywords}}
\IEEEoverridecommandlockouts

\maketitle
\copyrightnotice

\IEEEpubidadjcol
\IEEEdisplaynontitleabstractindextext
\IEEEpeerreviewmaketitle

\section{Introduction} \label{sec: I}
\IEEEPARstart{R}{ide}-sourcing companies, also referred to as Transportation Network Companies (TNCs) gradually dominated the pre-existing taxi market over the past decade, with evidence of their immense success stipulated in market analytics of urban transport data such as in New York City \cite{Schneider2020}. TNCs often encounter imbalances in the supply and demand for rides. Such imbalances can increase customer wait times in areas where there is an under-supply of drivers, thereby decreasing the quality of service and the popularity of the platform. To mediate this effect, TNCs apply dynamic pricing strategies, usually in the form of variable surge pricing multipliers \cite{chen2016dynamic}. These dynamic pricing strategies by design motivate drivers to redistribute to under-served areas and suppress demand from customers whom their willingness to pay is exceeded \cite{banerjee2015pricing}. 

The anticipated launch of autonomous vehicles in TNC services to cut operational costs could transform TNCs from matching platforms to fleet operators having complete control of the supply \cite{qiu2018dynamic}. In such a scenario, currently implemented dynamic pricing strategies would still suppress demand \cite{Karamanis2018}, but TNCs, as fleet owners, would need to decide vehicle redistribution operations. Generally, in the absence of drivers entering the market proactively by knowing historical surge pricing patterns \cite{chen2016dynamic}, autonomous vehicle ride-sourcing operators would need to manage their fleet effectively, to alleviate any asymmetries of demand across road-networks.

Fleet management, and especially empty vehicle redistribution, although not prevalent in ride-sourcing markets, has been an established practice in shared mobility (bike and car-sharing) \cite{DELLAMICO20147, NOURINEJAD201598}. In existing shared mobility schemes, vehicle redistribution is carried out by dedicated staff. In the case of autonomous ride-sourcing, platforms would be able to instruct vehicles to self-relocate, thereby avoiding dedicated staff costs. TNCs could also proactively decide vehicle redistribution operations by exploiting the diverse area of predictive algorithms and existing data.

Nonetheless, this seamless autonomous vehicle relocation would endure mileage costs. Besides, increased fleet mileage can induce externalities such as congestion subject to fleet adoption rates \cite{levin2017general}. Furthermore, as ride-sourcing markets are competitive, and travellers encounter alternative options, redistributed vehicles in an area are not guaranteed an assignment. Consequently, vehicle redistribution models which take account of relocation costs, local market structure and travel behaviour are paramount in assessing the value of vehicle redistribution to autonomous ride-sourcing platforms.

Research on taxi economics, such as the work in \cite{Yang2002}, has been seminal in influencing the implementation of spatio-temporal characteristics when modelling taxi markets. The authors in \cite{Yang2002} did assume the flow of taxis in neighbouring areas; however, the work focused on evaluating system performance metrics for regulatory frameworks. Later studies also considered optimizing redistribution in shared mobility schemes such as bike-sharing or car-sharing\footnote{For a comprehensive review of vehicle redistribution algorithms for car-sharing we refer readers to \cite{ILLGEN2019193}} fleets \cite{DELLAMICO20147, NOURINEJAD201598, nair2011fleet, pfrommer2014dynamic}. 

The structural differences\footnote{In traditional shared mobility schemes vehicles are usually parked at designated stations, and dedicated staff performs the relocation. Also, vehicles are usually booked in advance.} between ride-sourcing and traditional vehicle sharing schemes prohibited the direct application of such models in the ride-sourcing market. Vehicle redistribution for ride-sourcing/taxi markets became popular alongside the concept of autonomy. As a consequence, researchers investigated vehicle redistribution in simulation studies implementing simplified redistribution heuristics for shared autonomous vehicles \cite{fagnant2014travel, levin2017general}. These studies were critical in identifying the extra mileage and congestion, respectively, when redistributing empty shared autonomous vehicles.

Recent research focused on identifying redistribution strategies for ride-sourcing operations using demand predictions and linear integer programming, queuing theoretical models or machine learning methods to identify relocation strategies. In \cite{zhang2016control} a closed Jackson network was used to simulate an autonomous mobility-on-demand (MoD) service with passenger loss. The authors then solved the vehicle rebalancing problem using a linear program. Using a queuing-theoretical implementation, they showed that congestion effects due to rebalancing could be avoided. The authors in \cite{iglesias2019bcmp} replaced the Jackson network with a Baskett–Chandy–Muntz–Palacios (BCMP) queuing network model and considered vehicle charging operations.

The authors in \cite{wen2017rebalancing} and \cite{lin2018efficient} used reinforcement learning to identify rebalancing actions in an MoD scheme and ride-sourcing platform respectively and showed that their methods achieve effective rebalancing strategies. A fluid-based optimization problem on a queuing network was used in \cite{braverman2019empty} to identify an optimal routing policy with an upper bound for empty car routing in ride-sharing systems. The study in \cite{YU2019114} considered a Markov decision process for the problem of vacant taxi routing with e-hailing. The authors solved their model using an iterative algorithm to maximise the expected long-term profit over a working period. 

The authors in \cite{alonso2017predictive} and \cite{wallar2018vehicle} used predictive algorithms to estimate incoming requests and an integer programming model to assign idle vehicles to clustered regions. Their models were tested in a simulation of the New York City taxi data and achieved a significant reduction of waiting times. Model predictive control for vehicle redistribution was also utilised in \cite{iglesias2018data} to utilise short term estimations of customer demand. The model in \cite{iglesias2018data} achieved a significant reduction in waiting times when tested in simulation using Didi data.

The majority of relevant studies on vehicle redistribution (Table \ref{tab:lit-review}) do not consider acumen in customer behaviour. The two main approaches are, assuming unassigned customers abort the platform immediately (passenger loss) or setting a maximum customer wait time (maximum wait), after which all customers abort the service. However, in realistic ride-sourcing implementations, potential customers would encounter alternative travel options based on local competition, thereby deciding their mode choice based on several factors \cite{yang2020integrated}. Consequently, oversimplified models of customer behaviour or nonexistent representation of competition do not reflect the real value and cost of redistribution, nor do they account for the notion of diminishing returns, since vehicle relocations do not necessarily result in guaranteed customers. 

To address this literature gap, we present a mathematical programming formulation that accounts for alternative transport services offered and customer choice behaviour. We model the derived vehicle redistribution problem as a non-linear minimum cost flow problem and prove that the model can have an optimal solution in a convex domain. We transform the non-linear model to the convex minimum cost flow problem and solve it using an edge-splitting pseudo-polynomial algorithm.

Our contribution is summarised as follows:
\begin{enumerate}
    \item We incorporate customer behaviour and local competition to account for the notion of diminishing returns in the vehicle redistribution problem.
    \item We model the vehicle redistribution problem as a Non-Linear Minimum Cost Flow problem.
    \item We derive a convex space for the problem and transform it into a Convex Minimum Cost Flow Problem, which is solved using a pseudo-polynomial algorithm edge splitting algorithm.
\end{enumerate}

The remainder of this paper is structured as follows: In section \ref{Sec: II}, we outline the structure of our proposed vehicle redistribution model as a non-linear minimum cost flow model. We then prove the existence of a convex region and present our edge-splitting solution algorithm. In section \ref{sec: III}, we test our redistribution methodology in an agent-based model by using taxi data from New York City. Conclusions and recommendations for further work are provided in section \ref{sec: IV}. 

\begin{table}[]
\centering
\caption{Relevant studies on vehicle redistribution}
\label{tab:lit-review}
\begin{tabular}{@{}llll@{}}
\toprule
Study                              & Method                                                                           & \begin{tabular}[c]{@{}l@{}}Redistribution \\ Cost\end{tabular}     & \begin{tabular}[c]{@{}l@{}}Customer \\ Behavior\end{tabular} \\ \midrule
\cite{zhang2016control, iglesias2019bcmp} & Queuing theory                                                                   & Congestion                                                         & \begin{tabular}[c]{@{}l@{}}Passenger loss\end{tabular}    \\
\cite{wen2017rebalancing}                 & \begin{tabular}[c]{@{}l@{}}Reinforcement \\ Learning\end{tabular}                & Distance-based                                                     & \begin{tabular}[c]{@{}l@{}}Passenger loss\end{tabular}    \\
\cite{lin2018efficient}                   & \begin{tabular}[c]{@{}l@{}}Reinforcement \\ Learning\end{tabular}                & Fuel-based                                                         & Not-specified                                                \\
\cite{braverman2019empty}                 & Queuing theory                                                                   & Time-based                                                         & \begin{tabular}[c]{@{}l@{}}Passenger loss\end{tabular}    \\
\cite{YU2019114}                          & \begin{tabular}[c]{@{}l@{}}Markov decision \\ process\end{tabular}               & Time-based                                                         & Not-specified                                                \\
\cite{alonso2017predictive}               & \begin{tabular}[c]{@{}l@{}}Demand prediction,\\ integer programming\end{tabular} & Passenger delay                                                    & \begin{tabular}[c]{@{}l@{}}Maximum wait\end{tabular}      \\
\cite{wallar2018vehicle}                  & \begin{tabular}[c]{@{}l@{}}Demand prediction,\\ integer programming\end{tabular} & Time-based                                                         & \begin{tabular}[c]{@{}l@{}}Maximum wait\end{tabular}      \\
\cite{iglesias2018data}                   & \begin{tabular}[c]{@{}l@{}}Demand prediction,\\ integer programming\end{tabular} & \begin{tabular}[c]{@{}l@{}}Time and \\ Distance-based\end{tabular} & No passenger loss                                                      \\ \bottomrule
\end{tabular}
\end{table}

\section{Methodology} \label{Sec: II}
\subsection{Vehicle Redistribution Problem} \label{RAP}
We consider an autonomous vehicle ride-sourcing fleet operator, opposed with the problem of identifying an allocation of vehicles to various operations to minimise the fleet's operational cost. The fleet operator identifies allocations of the vehicles at regular decision periods and operates in an urban road network split into different clusters.

Immediately before the allocation decision, the operator identifies the vehicle counts in each cluster, including the vehicles soon to be located in each area. At each decision epoch, the fleet operator needs to allocate the vehicles in each cluster into three possible operational states; available for trip allocation, empty redistribution, or idle. 

Vehicles assigned for trip allocation are immediately available for trip requests originating from their existing cluster, and their number depends on demand estimates for the commencing period. Empty redistribution refers to vehicles allocated for empty travel to other clusters to satisfy demand estimates for the commencing and subsequent periods. Finally, idle vehicles remain inactive in their initial cluster for the commencing period and act as reserve capacity for the fleet if required.

Consequently, vehicles are allocated from their initial state and clusters into the various operations at the beginning of the decision period and end up in updated states and clusters for the subsequent period. For convenience, we refer to the updated vehicle states as resulting states and the operational states as decision states.

We define the set of road network clusters $J$ and assume the fleet operator has estimates of the total demand $Z^t_{ij}$ from cluster $i$ to cluster $j$ for each $i,j \in J$ and for every time epoch $t \in T$. We assume the mean utility of travel in time epoch $t$ for autonomous ride-sourcing trips from cluster $i$ to cluster $j$ for each $i,j \in J$ is realised using the following generalised cost function:

\begin{equation} \label{eq:1}
    g^t_{ij}(x) = -\bar{v}(w^t_{ij}(x) + r^t_{ij}) -p r^t_{ij} \quad \forall i,j \in J, \forall t \in T
\end{equation}

Where $\bar{v}$ in \eqref{eq:1} is the mean value of time of the ride-sourcing travellers, $w^t_{ij}(x)$ is the average wait time from request to pick up for a trip originating in cluster $i$ and terminating in cluster $j$ at period $t$ for a supply of vehicles $x$, $r^t_{ij}$ is the average travel time from cluster $i$ to cluster $j$ during period $t$ and $p$ is the price per time for the service. 

The studies in \cite{zha2018geometric}, \cite{korolko2018dynamic} and \cite{karamanis2020abm} derived estimates of the pickup wait time which are inversely proportional to the square root of idle vehicles in an area over a time interval. This relationship is consistent across the literature and takes inputs such as the area and average velocity in the network. Nonetheless, the relationship between idle vehicles over a period in a cluster and the total supply of vehicles $x$ is non-trivial and of dynamic nature, as it depends upon the rate of requests and idle vehicle arrivals \cite{karamanis2020abm}. We, therefore, define a function of wait time $w^t_{ij}(x)$, which has the following properties: 

\begin{equation}
    w^t_{ij}(x) = \frac{\beta^t_{i}}{\sqrt{I^t_{ij}(x)}} \quad \forall i,j \in J, t \in T, x \in \mathbb{R}^+\label{eq:2}
\end{equation}

Where $I^t_{ij}(x)$ in equation \eqref{eq:2} is an arbitrary function of the average number of idle vehicles in cluster $i$ available for travel to cluster $j$ with respect to supply $x$ on period $t$. Similar to the literature, $\beta^t_{i}$ is proportional to the square root of the cluster area and inversely proportional to the average velocity in the network. $\beta^t_{i}$ can be determined via calibration with the use of agent-based modelling (ABM) \cite{karamanis2020abm}. By utilising a discrete choice model and assuming constant values for $r^t_{ij}$ and $p^t_{ij}$ in each period, we can calculate the proportion of travellers $q^t_{ij}(x)$ choosing the ride-sourcing fleet as an option to travel from cluster $i$ to cluster $j$ for a period $t$.

\begin{equation} \label{eq:3}
    q^t_{ij}(x) = \frac{e^{g^t_{ij}(x)}}{e^{g^t_{ij}(x)} + \sum_{w \in W} e^{U^{tw}_{ij}}} \quad \forall i,j \in J, \forall t \in T
\end{equation}

Where $W$ in \eqref{eq:3} refers to the set of alternative ride-sourcing options and $U^{tw}_{ij}$ is the mean utility of option $w\in W$ for travelling from cluster $i$ to cluster $j$ in period $t$. As such, the number of travellers $N^t_{ij}(x)$ choosing the ride-sourcing service at period $t$ to travel from cluster $i$ to cluster $j$ for a supply of vehicles $x$ is found using the following equation:

\begin{equation} \label{eq:4}
    N^t_{ij}(x) =  q^t_{ij}(x) Z^t_{ij} \quad \forall i,j \in J, \forall t \in T
\end{equation}

Equation \eqref{eq:3} is a sigmoid function, as we can represent it in the form of the logistic function. Equation \eqref{eq:4} is a scaled version of the sigmoid function in \eqref{eq:3}. Consequently, due to its non-linearity and monotonically increasing nature, the function for the number of travellers choosing the service $N^t_{ij}(x)$ does not necessarily match the supply of vehicles $x$.  

\subsection{Non-Linear Minimum Cost Flow Formulation} \label{NMCF}
The logic defined in equations \eqref{eq:1}-\eqref{eq:4}, represents the aggregated model which the fleet operator uses to estimate trips from and to each cluster. The number of travellers choosing the ride-sourcing service, however, is contingent on the redistribution strategy and the vehicle supply, which shapes the service quality defined in equation \eqref{eq:2}. 

We, therefore, propose a minimum cost flow formulation in the form of resource allocation to solve the vehicle redistribution problem described in section \ref{RAP}. Consider a directed acyclic graph $G=(V, E)$, with $V$ and $E$ representing the sets of graph vertices and edges respectively. The set of vertices $V$ consists of three subsets $A$, $B$, $C$, representing the initial states, decision states and resulting states respectively, such that $V=A \cup B \cup C$. 

For initial state vertices, we consider the numbers of available vehicles at the beginning of epoch $t$ at each cluster. We also subdivide the decision state vertices into the subsets $K$, $L$, $M$ of trip, redistribution and idle states respectively, such that $B=K \cup L \cup M$. Finally, for resulting states, we consider the numbers of available vehicles at the beginning of epoch $t+1$ at each cluster. We associate vertices in each set with the set of road network clusters $J$ such that $|A|=|K|=|L|=|M|=|C|=|J|$.  Consequently, the cardinality $n$ of the set $V$ of vertices is $n=|V|=5|J|$. 

A graph edge $(i,j) \in E$ between two vertices $i,j \in V$, represents the change from state $i$ to state $j$ due to allocation decisions. Figure \ref{fig:1} outlines an example of our proposed resource allocation graph with two clusters. We assume vertices in $A$ have directed edges which connect to the vertices in $B$ only in their respective cluster, with a direction from $A$ to $B$. Consequently, there are three edges from vertices in $A$ to vertices in $B$ for each cluster.

We further assume edges between vertices in $B$ and $C$. Each vertex in $K$ connects to all vertices in $C$. For redistribution edges starting from vertices in $L$ and terminating to vertices in $C$, we exclude edges which start and terminate in the same cluster. Finally, for each cluster, we assume an edge from the corresponding cluster vertex in $M$ to the corresponding cluster vertex in $C$. As such the cardinality $m$ of the set of edges $E$ is $m=|E|=3|J| + |J|^2 + |J|(|J|-1) +|J|=|J|(2|J|+3)$. 

\begin{figure}[]
\centering
\includegraphics[width=0.48\textwidth]{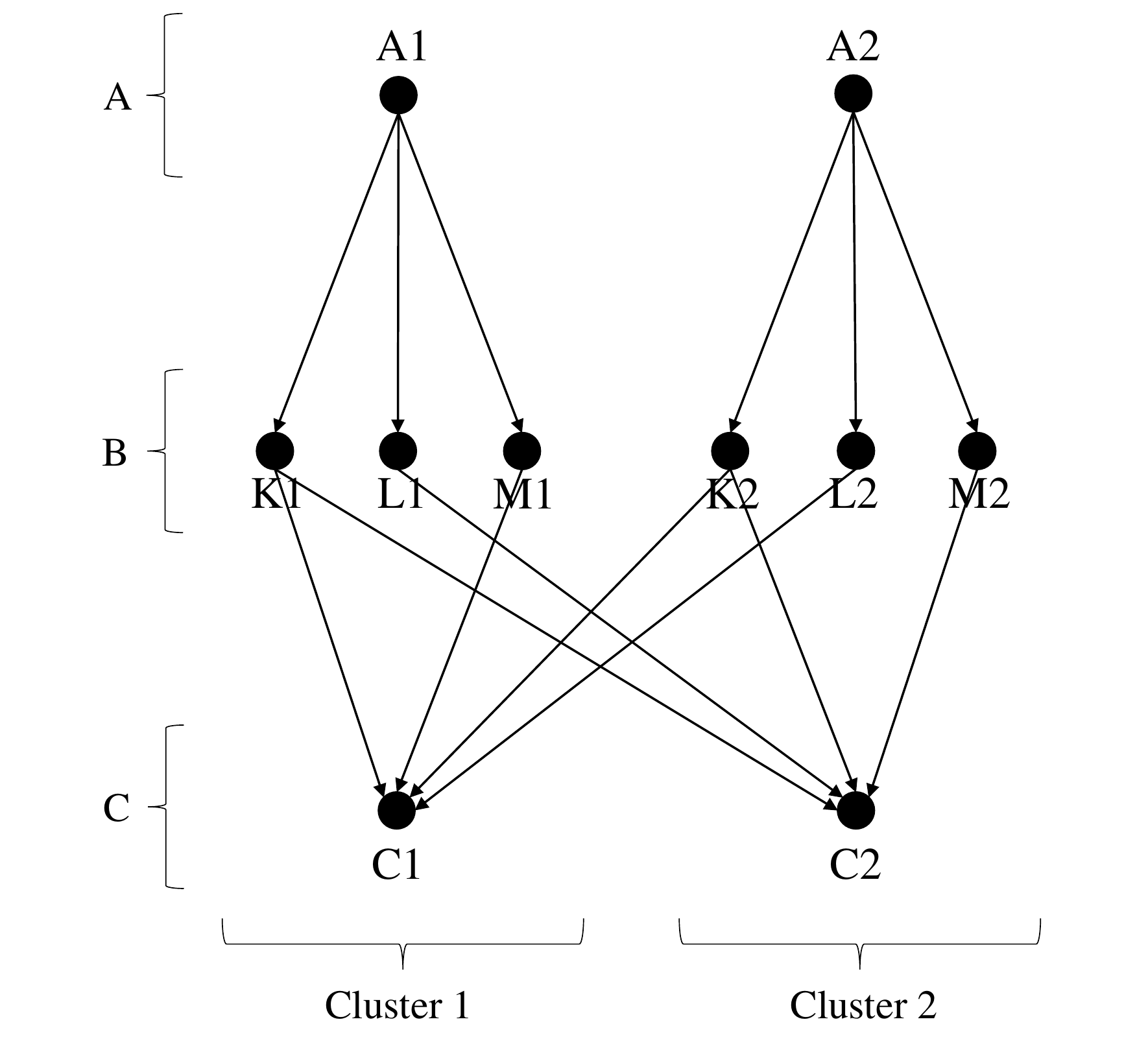}
\caption{Example of the proposed resource allocation graph with 2 clusters.}
\label{fig:1}
\end{figure}

To assist our minimum cost flow formulation, we introduce the following functions on $E$: a lower bound $l_{ij} \geq 0$, a capacity $u_{ij} \geq l_{ij}$ and a cost $c_{ij}$ for each $(i,j) \in E$. Furthermore, we introduce the balance vector function $b:V \rightarrow \mathbb{Z}$ which associates integer numbers with each vertex in $V$. We assume the following holds:

\begin{equation}
    \sum_{v \in V} b(v) = 0 \label{eq:5}
\end{equation}

We denote the flow in graph $G$ as a function $x: E \rightarrow Z^{\geq0}$ on the edge set of $G$, such that the value of the flow on edge $(i,j)$ is $x_{ij}$. The balance vector function $b(v)$ for each $v \in V$ is the difference between the flow in edges of out-degree of $v$ and the flow in edges of in-degree of $v$. As such, the balance vector $b$ is the following function on the vertices:

\begin{equation}
    b(j) = \sum_{i:ji \in E} x_{ji} - \sum_{i:ij \in E} x_{ij} \quad\forall j \in V \label{eq:6}
\end{equation}

We classify vertices with $b(v)>0$ as source vertices, whereas vertices with $b(v)<0$ are sink vertices. Otherwise, if a vertex has $b(v)=0$ we call that vertex balanced. Consequently, a flow $x$ in $G$ is feasible if $l_{ij} \leq x_{ij} \leq u_{ij}$ for all $(i,j) \in E$ and equations \eqref{eq:5} and \eqref{eq:6} hold for all vertices $v \in V$.  Considering our redistribution problem introduced in Section \ref{RAP}, we can regard the vertices $v \in A$ as source vertices since these constitute the initial states of all vehicles in the fleet. In a similar fashion, the vertices in $C$ can be identified as sinks, however; their balance vectors cannot be defined in advance, as certain demand requirements might lead to violation of \eqref{eq:5}. 

The balance vector function values for sink nodes are related to the demand for trips towards each cluster in the road network. As the minimum cost formulation for solving the resource allocation problem would adhere that a feasible flow satisfies equations \eqref{eq:5} and \eqref{eq:6}, we transform our graph from a multi-source, multi-sink to a multi-source, single-sink one. To do so, we introduce the set of vertices $D$, for which $|D|=|J|$ such that there is one vertex from set $D$ in each cluster. We also introduce sink vertex $t$ for which $t \cap J= \emptyset$. Vertices in $D$ denote demand satisfiability for the subsequent period. Consequently we have $V \gets V \cup (D \cup t)$.

Vertices from $C$ are connected with edges to vertices in $D$ in each cluster, to denote that vehicles which terminate their tasks or are idle in a cluster during a time epoch, could be available if needed in the same cluster for the subsequent period. Furthermore, to account for excess vehicles for subsequent demand, we also consider directed edges from all vertices in $C$ to the sink vertex $t$. We transfer the flow from vertices in $D$ to the sink vertex $t$ by including additional directed edges between them. 

Edges starting from vertex sets $K$ to $C$ in each cluster as shown in Figure \ref{fig:1} represent trip edges. If we focus on an individual cluster, the vehicle flow through these edges originates from the same cluster and aims to satisfy the demand for the commencing period. However, depending on geographical proximity, redistributing vehicles from other areas might arrive in the cluster before the end of the commencing period. As a consequence,  redistributing vehicles could be exposed to a portion of the demand originating from the cluster during the commencing period.

The above description implies that within a period, in each cluster, there can be variable supply levels exposed to variable demand portions due to the mixing of redistributing vehicles from different clusters. This behaviour is captured in the formulation of the rebalancing problem in \cite{wallar2018vehicle}. To account for the intra-period redistribution mixing, we introduce additional sets of vertices and edges to the network described in Figure \ref{fig:1} between the vertex sets of $B$ and $C$. Specifically, in each cluster, we add vertices representing the arrival of vehicle flow from redistributing vertices $L$ from other clusters. As a consequence, we add $|J|-1$ vertices in each cluster. These additional vertices are extensions to the vertex subset $K$ since they are trip vertices.

We then connect each additional edge in $K$ with a redistribution vertex in $L$ from other clusters, resulting in $|J|(|J|-1)$ additional edges. We also connect each additional edge in $K$ with all vertices in $C$, resulting in $J^2(|J|-1)$ additional edges. Furthermore, to model the vehicle mixing with vehicles already in each cluster, we need to add edges from the original vertices in $K$, to the additional vertices in $K$. To do so, we identify the sequence of vehicle mixing using a sorted list of the arrival times in each cluster. 

For convenience, in each cluster $i$ in $J$, we denote original vertices in $K$ as $K_i$. As the arrival of vehicles from other clusters has a cumulative effect on the supply in each cluster, we denote the additional vertices in $K$ using the sequence of arrivals. For example, if vehicles from $L_j$ arrive in cluster $i$ before vehicles from $L_k$ for $i,j,k$ in $J$, we denote the vertices corresponding to the mixing of vehicles as $K_{ij}$ and $K_{ijk}$, for redistribution occurring from clusters $j$ and $k$ respectively. Finally, to complete the mixing, in each cluster, we introduce $|J|-1$ edges between the intra-cluster vertices in $K$ according to the sequence of arrivals. Revisiting the above example, in cluster $i$, directed edges are introduced from $K_i$ to $K_{ij}$ and from $K_{ij}$ to $K_{ijk}$. An outline of the transformed graph is shown in Figure \ref{fig:2}. As such, the following equations hold:

\begin{equation}
    b(v)>0 \quad\forall v \in A \label{eq:7}
\end{equation}
\begin{equation}
    b(v)=0 \quad\forall v \in B \cup C \cup D \label{eq:8}
\end{equation}
\begin{equation}
    b(t)<0  \label{eq:9}
\end{equation}

\begin{figure}[]
\centering
\includegraphics[width=0.48\textwidth]{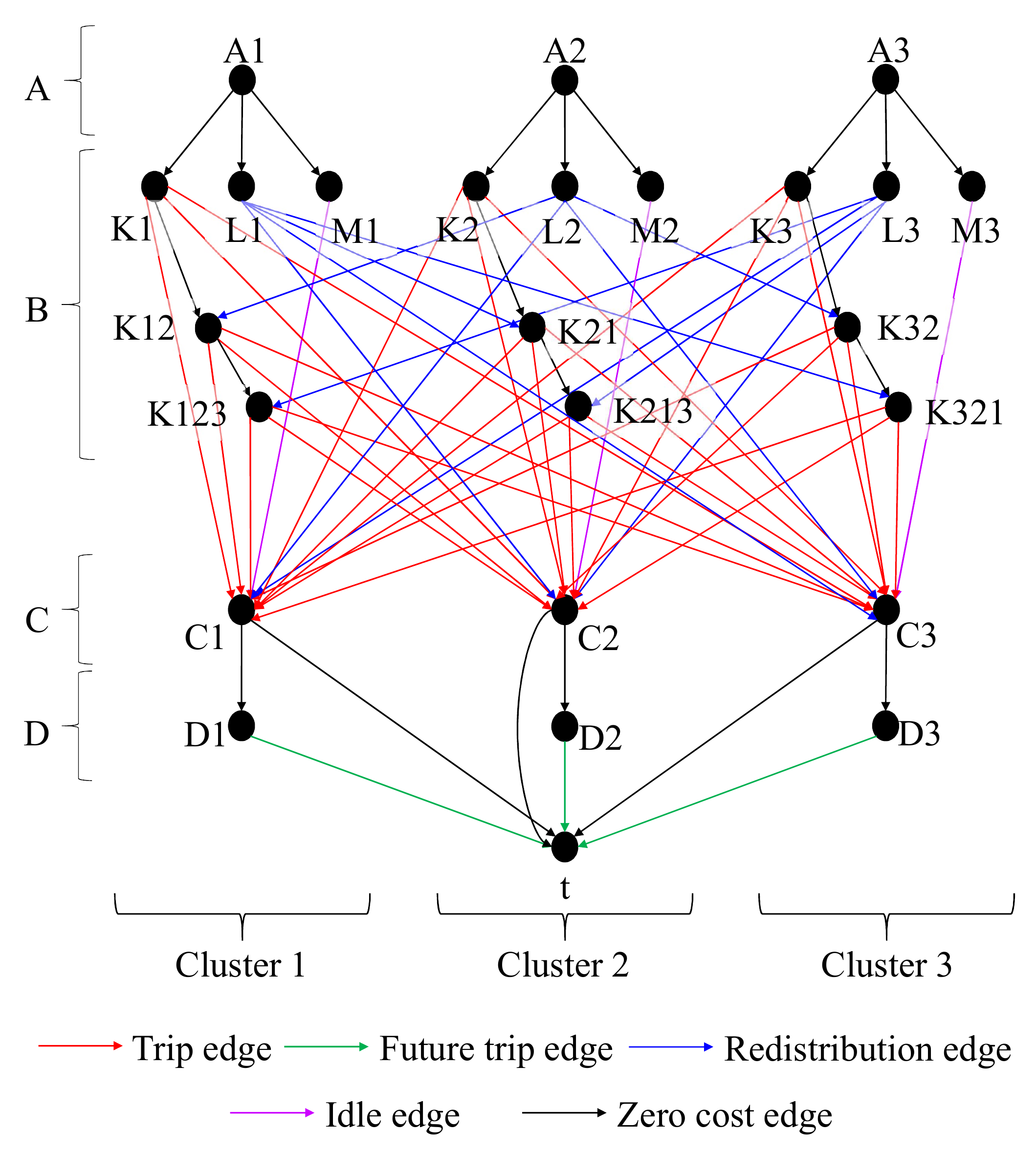}
\caption{Example of the transformed single source/sink resource allocation graph with 3 clusters.}
\label{fig:2}
\end{figure}

To assist our notation, we define a function $n: V \rightarrow J$, which maps vertices of the resource allocation network to clusters in $J$. Furthermore, to simplify set notation for edges, we define the edge sets $\mathcal{A}, \mathcal{B}, \mathcal{C}, \mathcal{D}, \mathcal{E} \in E$. These edge sets represent the trip, future trip, redistribution, idle and zero cost edges respectively, as described in Figure \ref{fig:2}. As such, we define the cost functions for the edges in the graph as follows:

\begin{equation}
    c_{ij}(x_{ij}) =  h_{ij}(x_{ij}) \quad \forall (i,j) \in \mathcal{A} \label{eq:10}
\end{equation}
\begin{equation}
\begin{aligned}
c_{ij}(x_{ij}) = r^1_{n(i)n(j)}C_M x_{ij} \quad \forall (i,j) \in \mathcal{C} \label{eq:11}
\end{aligned}
\end{equation}
\begin{equation}
    c_{ij}(x_{ij}) = C_I x_{ij} \quad \forall (i,j) \in \mathcal{D} \label{eq:12}
\end{equation}
\begin{equation}
    c_{ij}(x_{ij}) =  f_{ij}(x_{ij})  \quad \forall (i,j) \in \mathcal{B}\label{eq:13}
\end{equation}
\begin{equation}
    c_{ij}(x_{ij}) = 0 \quad \forall (i,j) \in \mathcal{E} \label{eq:14}
\end{equation}

Equations \eqref{eq:10}-\eqref{eq:12} define the cost functions for edges directed from decision state vertices $B$ to vertices in $C$. Equation \eqref{eq:10} can have a negative sign, as the revenues from trip allocations with a supply of $x_{ij}$ in the next epoch are subtracted from the cost using function $h_{ij}(x_{ij})$. Parameter $r^1_{n(i)n(j)}$ is the average travel time between the clusters $n(i)$ and $n(j)$ of vertices $i$ and $j$ at the initial epoch as introduced in equation \eqref{eq:1}, and $C_M$ is the cost of a moving vehicle per time. Consequently, equation \eqref{eq:11} defines the cost of redistribution for vehicles. Equation \eqref{eq:12} defines the cost of idle vehicles, with $C_I$ to denote the cost of an idle vehicle per period.

Equation \eqref{eq:13}, is similar to equation \eqref{eq:10} but subtracts the revenues from potential trips in the subsequent period from the costs, depending on vehicle supply, using function $f_{ij}(x_{ij})$. Finally, we set the cost to zero for the remaining edges. Functions $h_{ij}(x_{ij})$ and  $f_{ij}(x_{ij})$  in equations \eqref{eq:10} and \eqref{eq:13} are outlined in the following equations:

\begin{equation}
    \begin{aligned}
        h_{ij}(x_{ij}) =  -\phi_{ij} N^1_{n(i)n(j)}(x_{ij}) p r^1_{n(i)n(j)} + C_M r^1_{n(i)n(j)} x_{ij}\\ \quad\forall(i,j)\in \mathcal{A} \label{eq:15}
    \end{aligned}
\end{equation}

\begin{equation}
   \begin{aligned}
        f_{ij}(x_{ij}) = \sum_{m \in J}\bigg( -N^2_{n(i)m}\bigg(\frac{x_{ij}}{|J|}\bigg) p r^2_{n(i)m}+ C_M r^2_{n(i)m} x_{ij}\bigg)\\
        \forall(i,j)\in \mathcal{B} \label{eq:16}
    \end{aligned}
\end{equation}

As observed in equations \eqref{eq:15} and \eqref{eq:16}, both functions utilise $N^t_{ij}(x)$, which refers to the number of travellers choosing the service given a supply of vehicles $x$ introduced in equation \eqref{eq:4}. Parameter $p$ is the revenue per time for each vehicle as in equation \eqref{eq:1}, $r^t_{ij}$ is the travel time between clusters $i$ and $j$ during epoch $t$ as in equations \eqref{eq:1} and \eqref{eq:11}. To accommodate vehicle mixing from redistribution, we factor the number of travellers $N^t_{ij}(x)$ in equation \eqref{eq:15} by a factor $\phi_{ij} \in [0,1]$ according to the arrival sequence of redistributing vehicles in each cluster. A visual demonstration of how the demand factor $\phi_{ij}$ is calculated for each edge in $\mathcal{A}$ in a cluster is highlighted in Figure \ref{fig:21}.

\begin{figure}[]
\centering
\includegraphics[width=0.48\textwidth]{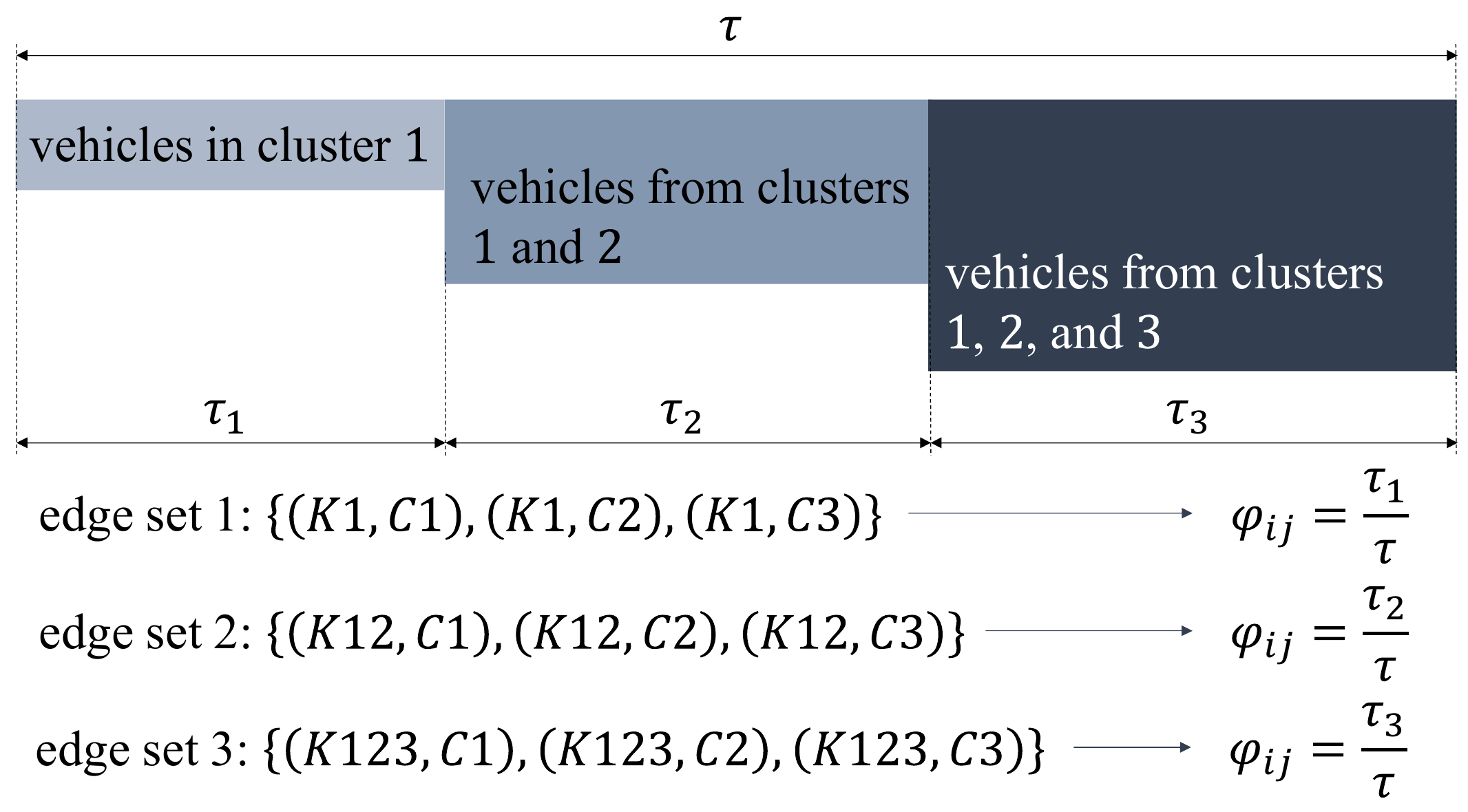}
\caption{Example of how the demand factors $\phi$ are calculated in cluster 1 of an instance with 3 clusters and decision epoch of length $\tau$.}
\label{fig:21}
\end{figure}

We further define the lower bounds of all vertices to zero and unbounded edge capacities as follows:

\begin{equation}
    l_{ij} = 0 \quad \forall (i,j) \in E \label{eq:17}
\end{equation}

\begin{equation}
    u_{ij} = \infty \quad \forall (i,j)\in E \label{eq:18}
\end{equation}

 We also define the balance vectors for the source and sink vertices $s$ and $t$ as follows:

\begin{equation}
    b(i) = S_i \quad \forall i \in A\label{eq:19}
\end{equation}

\begin{equation}
    b(t)= -\sum_{v \in A}b(v)=-\sum_{i \in A} S_i \label{eq:20}
\end{equation}

Parameter $S_i$ in \eqref{eq:19} and \eqref{eq:20} denotes the available vehicles $S_i$ in cluster $n(i)$ at the start of the current period.

As such, the vehicle redistribution problem introduced in section \ref{RAP} can be solved using the following nonlinear minimum cost flow optimization problem:

\textit{Model 1}:
\begin{mini!}[1]{}{\sum_{(ij) \in E} c_{ij} (x_{ij}) \label{eq:21a}}{}{}{}{}
    \addConstraint{x_{ij}}{\geq l_{ij}}{\quad}{\forall (i,j) \in E}{\label{(21b)}}
    \addConstraint{x_{ij}}{\leq u_{ij}}{\quad}{\forall (i,j) \in E}{\label{(21c)}}
    \addConstraint{b(j)}{=\sum_{i:ji \in E} x_{ji} - \sum_{i:ij \in E} x_{ij}}{\quad}{\forall j \in V}{\label{(21d)}}
    \addConstraint{x_{ij}}{\in \mathbb{R}}{\quad}{\forall (i,j) \in E}{\label{(21e)}}
\end{mini!}

Equations \eqref{(21b)}-\eqref{(21c)} ensure the flow of vehicles through each edge $(i,j)$ is within the specified lower and upper bounds respectively, as specified in equations \eqref{eq:17} and \eqref{eq:18}. \eqref{(21d)} is the flow continuity constraint as specified in equation \eqref{eq:6}.

\subsection{Convex Minimum Cost Flow Transformation} \label{CMCF}
The objective function in \eqref{eq:21a} is nonlinear, as a result of the costs for edges in $\mathcal{A}$ and $\mathcal{B}$. Such cost functions include the term $N^t_{ij}(x)$ introduced in \eqref{eq:4}, of which its component $q^t_{ij}(x)$ (eq. \eqref{eq:3}) and sub-component $w^t_{ij}(x)$ (eq. \eqref{eq:2}) are nonlinear. Identifying the nature of equation \eqref{eq:15} is paramount for the choice of a solution method for \textit{Model 1}. Classifying equation \eqref{eq:15} would suffice as \eqref{eq:16} is a linear sum of equation \eqref{eq:15} instances.

According to \cite{karamanis2020abm}, equation \eqref{eq:2} will result in maximum wait time up to the point of critical fleet size, when the number of idle vehicles $I^t_{ij}(x)$ will be sufficient to sustain the rate of incoming requests. Including this notion in the cost, equations will result in increasing costs for trip edges up to the point of critical fleet size. Instead, to assist the solution process and reduce concavity, we choose the following relaxed version of equation \eqref{eq:2} which can result in similar wait times for fleet sizes above the critical fleet size:

\begin{equation}
    w^t_{ij}(x) = \frac{\alpha_i^t Z^t_{ij}}{x + 1} \quad \forall i,j \in J, t \in T, x \in \mathbb{R}^+ \label{eq:2 new}
\end{equation}



The parameter $\alpha_i^t$ can be determined by using linear least squares regression for specific values of $Z^t_{ij}$ and pairs of $x$ and $w^t_{ij}$ identified via ABM. To aid our analysis, we assume there is only one alternative ride-sourcing option which offers identical pricing rates and travel times, with a fixed wait time $\bar{w}$. We expect that the ride-sourcing platform prices its rides at a rate higher than its cost per time, such that $p>C_M$. We finally assume $Z^t_{ij}$ is large enough to justify the cost of redistribution. For notation convenience, we omit any index notation $i,j,t$ in the proof of the following theorem.

\begin{theorem}
$h_{ij}(x_{ij})$ has an absolute minimum point in $x_{ij} \in [0, \infty]$.
\end{theorem}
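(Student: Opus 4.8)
The plan is to collapse $h_{ij}$ to an explicit scalar function, show it is continuous and coercive on $[0,\infty)$, and then close with a standard Weierstrass/compactness argument.

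\textbf{Step 1 (reduction).} First I would feed the relaxed wait-time model \eqref{eq:2 new} and the single-alternative assumption into the chain \eqref{eq:1}, \eqref{eq:3}, \eqref{eq:4}. Dropping indices and writing $a := \bar v\,\alpha\,Z > 0$, the generalised cost is $g(x) = -a/(x+1) - (\bar v + p)r$, while the unique outside option has utility $U = -\bar v(\bar w + r) - pr$. Both $e^{g(x)}$ and $e^{U}$ carry the common factor $e^{-(\bar v + p)r}$, which cancels in \eqref{eq:3}, leaving
\begin{equation}
q(x) = \frac{e^{-a/(x+1)}}{e^{-a/(x+1)} + b}, \qquad b := e^{-\bar v \bar w} > 0 .
\end{equation}
Hence $N(x) = Z\,q(x)$, and with $A := \phi\,Z\,p\,r \ge 0$ and $B := C_M\,r > 0$ equation \eqref{eq:15} becomes simply $h(x) = -A\,q(x) + B\,x$.

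\textbf{Step 2 (continuity and coercivity).} Next I would record two elementary facts about this reduced form. First, the denominator satisfies $e^{-a/(x+1)} + b \ge b > 0$, so $q$ is continuous on $[0,\infty)$ with $0 < q(x) < 1$; therefore $-A\,q(x)$ is continuous and bounded, and $h$ is continuous on $[0,\infty)$. Second, as $x \to \infty$ we have $e^{-a/(x+1)} \to 1$, so $q(x) \to 1/(1+b)$ is finite while $B\,x \to \infty$; hence $h(x) \to +\infty$, i.e. $h$ is coercive (and, in particular, the endpoint value $h(+\infty) = +\infty$ plays no role).

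\textbf{Step 3 (existence of the minimiser).} Finally, since $h(0)$ is finite and $h(x) \to +\infty$, there is an $M > 0$ with $h(x) \ge h(0)$ for every $x \ge M$. On the compact interval $[0,M]$ the continuous function $h$ attains its minimum at some $x^{\star}$, with $h(x^{\star}) \le h(0)$. For $x \ge M$ we then have $h(x) \ge h(0) \ge h(x^{\star})$, and for $x \in [0,M]$ we have $h(x) \ge h(x^{\star})$ by the choice of $x^{\star}$; hence $x^{\star}$ is an absolute minimum of $h$ on $[0,\infty)$.

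I expect the only delicate point to be the algebraic reduction in Step 1 — specifically, verifying that the outside-option term in \eqref{eq:3} really does share the factor $e^{-(\bar v + p)r}$ with $e^{g(x)}$, so that $q$ genuinely reduces to the bounded function above; once that cancellation is secured, boundedness of $q$ and coercivity of $h$ are immediate and the rest is routine. (The standing assumption $p > C_M$ is not needed for mere existence; it would become relevant only if one additionally wanted the minimiser to be interior, which one could argue by examining $h'(x) = -A\,q'(x) + B$ together with $q'(\infty) = 0$, but the stated theorem asks only for an absolute minimum.)
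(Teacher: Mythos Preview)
Your proof is correct, and the algebraic reduction in Step~1 checks out: with the single-alternative assumption, both $e^{g(x)}$ and $e^{U}$ indeed carry the factor $e^{-(\bar v + p)r}$, so $q$ collapses to the bounded expression you wrote, and continuity plus coercivity then finish the job via Weierstrass.

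Your route differs from the paper's. The paper does not invoke a compactness argument; instead it computes the same limit $h(x)\to\infty$ as $x\to\infty$, observes $h(0^+)\approx 0^+$, and then uses the standing assumptions $p>C_M$, large $Z$, and a scaling condition $\tfrac{C_M\alpha}{p\bar w}\le 0.5$ to show that $h$ is \emph{strictly negative} at the specific point where $q(x)=0.5$ (namely $x=\tfrac{\alpha}{\bar w}Z-1$). From these three facts it concludes existence of an absolute minimum. What the paper's argument buys is extra information---the minimum value is negative, hence the minimiser is interior with $h'(x^\star)=0$---which is exactly what feeds into Corollary~1 and the subsequent convexification around $x^\star$. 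What your argument buys is economy and robustness: you need none of the parametric assumptions ($p>C_M$, large $Z$, the $\alpha$ scaling), and your proof survives degenerate cases such as $\phi=0$ where $h(x)=Bx$ never goes negative. As you correctly note at the end, those assumptions are only needed if one additionally wants an interior minimiser, which the theorem as stated does not demand.
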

\begin{proof}
We first consider the limit of $h(x)$ (eq. \eqref{eq:15}) as $x \rightarrow \infty$. We note that $\lim_{x \rightarrow \infty} w(x) = 0$, therefore, $\lim_{x \rightarrow \infty} g(x)$ is equal to some constant value $-r(\bar{v}+p)$. Consequently for $x \rightarrow \infty$, $q(x)$ converges to some finite maximum probability $p_{max}$. Since the upper bound of $q(x)$ is $1$, $\lim_{x \rightarrow \infty} N(x)=Z$. It is therefore straightforward to deduce the following limit:
\begin{equation}
    \lim_{x \rightarrow \infty} h(x)=\infty \label{eq:22}
\end{equation}

We now consider the limit of $h(x)$ for $x \rightarrow 0^+$. For a large $Z$, $\lim_{x \rightarrow 0^+} w(x) = \alpha Z$, $\lim_{x \rightarrow 0^+} g(x) = -\bar{v}(\alpha Z + r) -pr$. Due to the exponential nature of $q(x)$, for small values of $x$ (i.e. $x=1$ for large $Z$), we have that $\lim_{x \rightarrow 0^+} q(x) = 0$ and consequently $\lim_{x \rightarrow 0^+} N(x) = 0$. Therefore we arrive to the following result:
\begin{equation}
    \lim_{x \rightarrow 0^+} h(x)=0^+ \label{eq:23} 
\end{equation}
Let us now explore the case of $q(x)=0.5$. For $q(x)=0.5$, $w(x)=\bar{w}$, therefore by rearranging the terms of $w(x)$, we have that $x=\frac{\alpha}{\bar{w}}Z - 1$. Since $q(x)=0.5$, equation \eqref{eq:15} results to $h(x)=-0.5Zpr + C_M r x$. Substituting $x$ with $\frac{\alpha}{\bar{w}}Z - 1$, and simplifying we have the following equation:

\begin{equation}
    h(x) = r\bigg(-0.5 Z p + C_M (\frac{\alpha}{\bar{w}}Z - 1)\bigg) \label{eq:24}
\end{equation}

We know that the fleet operator chooses $p$ such that $p > C_M$ and we can scale $\alpha$, such that $\alpha < \bar{w}$ and $-0.5Z > \frac{C_M}{p} (\frac{\alpha}{\bar{w}}Z -1)$. Thus, by implementing the above inequalities, for $q(x)=0.5$, the following relationship holds:

\begin{equation}
    h(x) < 0 \quad \forall x \in \mathbb{R}^+ | q(x)=0.5 \land \frac{C_M \alpha}{p \bar{w}} \leq 0.5
\end{equation}

Using equations \eqref{eq:22} and \eqref{eq:23}, and by showing that $h(x)$ is negative for some $x \in \mathbb{R}^+$, we conclude that $h(x)$ has an absolute minimum point in $x \in [0, \infty]$.

\end{proof}

\begin{corollary}
    $h_{ij}(x_{ij})$ is convex for some domain $x_{ij} \in [x'_{ij}, x^*_{ij}]$. Where $h_{ij}(x^*_{ij})$ is the absolute minimum value of $h_{ij}(x_{ij})$ for $x_{ij} \in \mathbb{R}^+$ and $x'_{ij}$ is the largest value of $x_{ij}$ such that $h(x'_{ij})$ is a non-stationary inflection point and $x'_{ij} < x^*_{ij}$.
\end{corollary}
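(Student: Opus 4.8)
The plan is to reduce the whole question to the sign of the second derivative of the (relaxed) logistic factor $q$. Writing $h(x)=-\phi\,p\,r\,Z\,q(x)+C_M r x$ as in \eqref{eq:15}, the linear term has vanishing second derivative, so $h''(x)=-\phi\,p\,r\,Z\,q''(x)$; hence $h$ is convex exactly on the set $\{q''\le 0\}$ and every inflection point of $h$ is a sign change of $q''$. (If $\phi=0$, $h$ is linear and the claim is trivial, so assume $\phi>0$.) I would therefore first prove that, under the standing assumptions, $q''$ changes sign exactly once on $(0,\infty)$, passing from positive to negative at a single point $x_{\mathrm{infl}}$. This immediately gives that $h$ is concave on $(0,x_{\mathrm{infl}})$ and convex on $[x_{\mathrm{infl}},\infty)$, so $x_{\mathrm{infl}}$ is the unique inflection point of $h$.

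To analyse $q''$ I would write $q(x)=\sigma(t(x))$ with $\sigma(s)=(1+e^{-s})^{-1}$ the standard logistic and $t(x)=\bar v(\bar w-w(x))$, $w(x)=\alpha Z/(x+1)$, using the relaxed wait function \eqref{eq:2 new} and the single-alternative assumption made just before the theorem. Then $t$ is smooth and strictly increasing with $t'=\bar v\alpha Z/(x+1)^2>0$ and $t''=-2\bar v\alpha Z/(x+1)^3<0$, and $q''=\sigma''(t)(t')^2+\sigma'(t)t''$. On $\{t\ge 0\}$ (equivalently $w(x)\le\bar w$) we have $\sigma''(t)=\sigma'(t)(1-2\sigma(t))\le 0$, $\sigma'(t)>0$ and $t''<0$, so $q''<0$ there. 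On $\{t<0\}$, dividing the inequality $q''(x)<0$ through by $\sigma'(t)>0$ reduces it to $(1-2\sigma(t(x)))\,\bar v\,w(x)<2$; the left-hand side is a product of two strictly positive, strictly decreasing functions of $x$ on this region (so it is strictly decreasing), it exceeds $2$ at $x=0$ under the assumption that $Z$ is large enough that $w(0)=\alpha Z>\bar w$ (which also makes $\{t<0\}$ nonempty), and it tends to $0$ as $t(x)\uparrow 0$. Hence it crosses the value $2$ exactly once, at a point $x_{\mathrm{infl}}$ with $t(x_{\mathrm{infl}})<0$; together with the $\{t\ge 0\}$ case this yields $q''>0$ on $(0,x_{\mathrm{infl}})$ and $q''<0$ on $(x_{\mathrm{infl}},\infty)$.

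It remains to place $x_{\mathrm{infl}}$ relative to the minimiser $x^*$ and to check that $x_{\mathrm{infl}}$ is a \emph{non-stationary} inflection point, so that $x'=x_{\mathrm{infl}}$ in the statement. Since $q''$ changes sign from $+$ to $-$ at $x_{\mathrm{infl}}$, the derivative $q'$ attains its global maximum on $(0,\infty)$ there, and therefore $h'(x)=C_M r-\phi\,p\,r\,Z\,q'(x)$ attains its global \emph{minimum} at $x_{\mathrm{infl}}$. By the preceding theorem $h$ has an absolute minimum on $[0,\infty)$; since $h(0^+)=0^+$ while $h$ takes a strictly negative value (as shown in that proof, using $p>C_M$ and $Z$ large) and $h(x)\to\infty$, this minimum is attained at an interior point $x^*$ with $h'(x^*)=0$. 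If $h'\ge 0$ on $(0,\infty)$ then $h$ would be nondecreasing and $\inf h=h(0^+)=0$, a contradiction; hence $\min_{(0,\infty)}h'=h'(x_{\mathrm{infl}})<0$, so $x_{\mathrm{infl}}$ is non-stationary, and because $h$ is still strictly decreasing there we get $x_{\mathrm{infl}}<x^*$. Taking $x'=x_{\mathrm{infl}}$ — the unique, hence the largest, non-stationary inflection point below $x^*$ — we conclude that $h$ is convex on $[x',\infty)$ and in particular on $[x',x^*]$.

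The step I expect to be the main obstacle is the \emph{uniqueness} of the sign change of $q''$: the expression for $q''$ combines the convex-then-concave behaviour of the logistic in its argument with the convexity of $w(x)$, and $q''$ is not obviously monotone. The decisive simplification is the reduction — valid precisely on $\{t<0\}$ — of $q''<0$ to the scalar inequality $(1-2\sigma(t(x)))\,\bar v\,w(x)<2$, together with the observation that its left side is a product of positive decreasing functions; everything else is bookkeeping with the preceding theorem and the sign conventions $p>C_M$ and $Z$ large.
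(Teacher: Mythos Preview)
The paper states this corollary without proof, evidently regarding it as an immediate consequence of Theorem~1: once an interior absolute minimiser $x^*$ exists with $h'(x^*)=0$, then on the interval between $x^*$ and the nearest inflection point to its left the second derivative $h''$ cannot change sign, and if that sign were negative $h'$ would be strictly decreasing to $0$ at $x^*$, forcing $h$ to be strictly increasing on $(x',x^*)$ and hence $h(x')<h(x^*)$, contradicting minimality. That is the entire content the paper intends.

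Your argument is correct but takes a genuinely different and considerably more informative route. Instead of the abstract sign-of-$h''$ reasoning above, you compute $q''$ explicitly via the logistic/wait-time composition, reduce the sign condition on $\{t<0\}$ to the scalar inequality $(1-2\sigma(t(x)))\,\bar v\,w(x)<2$, and use monotonicity of both factors to obtain a \emph{unique} inflection point $x_{\mathrm{infl}}$. This buys you strictly more than the corollary asserts: you show that $h$ is concave on $(0,x_{\mathrm{infl}})$ and convex on $[x_{\mathrm{infl}},\infty)$, so the ``largest non-stationary inflection point below $x^*$'' is in fact the only one, and convexity extends past $x^*$ to all of $[x_{\mathrm{infl}},\infty)$. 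The paper's tacit argument, by contrast, is shorter but does not establish existence (let alone uniqueness) of $x'$, nor convexity beyond $x^*$; your version therefore justifies the later edge-splitting construction more fully. One small point to tighten: the condition $w(0)=\alpha Z>\bar w$ alone ensures $\{t<0\}$ is nonempty but does \emph{not} by itself force $(1-2\sigma(t(0)))\bar v\,\alpha Z>2$; you need the paper's standing ``$Z$ large'' assumption in the stronger sense. This is harmless for the corollary itself, since if the inequality already holds at $x=0$ then $q''<0$ throughout and $h$ is globally convex, but it is worth stating the hypothesis precisely.
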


Our aim is to identify and utilise the convexity of the domain $[x'_{ij}, x^*_{ij}]$ of each non-linear edge cost function to solve \textit{Model 1} as a convex minimum cost flow problem. In line with convexity, we replace the upper bounds $u_{ij}$ of each non-linear edge $(i,j)$ to the absolute minimum value $x^*_{ij}$. Therefore additional to equation \eqref{eq:18}, we introduce the following:

\begin{equation}
    u_{ij} = x^*_{ij} \quad \forall (i,j)\in \mathcal{A} \cup \mathcal{B} \label{eq:26}
\end{equation}

In a similar fashion, setting the lower bound $l_{ij}$ of any non-linear edge $(i,j)$ to the inflection point $x'_{ij}$, would restrain our non-linear cost functions to the convex domain. Nonetheless, we refrain setting lower bounds to our problem to avoid potential infeasibility of \textit{Model 1}. Instead, we split each non-linear cost function to a piece-wise one, with a linear part between $[0, x'_{ij}]$, and non-linear convex part between $[x'_{ij}, x^*_{ij}]$. We linearise $h_{ij}(x_{ij})$ between $[0, x'_{ij}]$ to avoid any concave parts of the cost misguiding our solution algorithm (Section \ref{sec: ALG}) towards non-optimal solutions.

As we will see in the next section (Section \ref{sec: ALG}), our proposed solution algorithm identifies optimal solutions of convex functions by incrementally moving from higher absolute values of the cost derivative $\frac{dh_{ij}(x_{ij})}{dx_{ij}}$ towards values where the derivative approaches zero  \big($\frac{dh_{ij}(x^*_{ij})}{dx_{ij}} = 0$\big). Consequently, we set the equation of the linearised part $ h^L_{ij}(x_{ij})$ between $[0, x'_{ij}]$ for $h_{ij}(x_{ij})$ to the equation of the tangent of $h_{ij}(x_{ij})$ at $x'_{ij}$ follows:

\begin{equation} 
    h^L_{ij}(x_{ij}) =  \frac{dh_{ij}(x'_{ij})}{dx_{ij}} x_{ij} - \frac{dh_{ij}(x'_{ij})}{dx_{ij}} x'_{ij} + h_{ij}(x'_{ij}) \label{eq:27}
\end{equation}

Similarly, by performing the same procedure for $ f_{ij}(x_{ij})$ in equation \eqref{eq:16}, the linearised part of the cost has the following form:

\begin{equation} 
    f^L_{ij}(x_{ij}) =  \frac{df_{ij}(x'_{ij})}{dx_{ij}} x_{ij} - \frac{df_{ij}(x'_{ij})}{dx_{ij}} x'_{ij} + f_{ij}(x'_{ij}) \label{eq:28}
\end{equation}

We thereby introduce the following convex cost functions to replace equations \eqref{eq:10} and \eqref{eq:13} with equations \eqref{eq:29} and \eqref{eq:30} respectively:
\begin{equation}
    c_{ij}(x_{ij}) =  h^C_{ij}(x_{ij}) \quad \forall (i,j) \in \mathcal{A} \label{eq:29}
\end{equation}

\begin{equation}
    c_{ij}(x_{ij}) =  f^C_{ij}(x_{ij})  \quad \forall (i,j) \in \mathcal{B} \label{eq:30}
\end{equation}

The functions $h^C_{ij}(x_{ij})$ and $f^C_{ij}(x_{ij})$ in equations \eqref{eq:29} and \eqref{eq:30} respectively have the following form:

\begin{equation} 
    h^C_{ij}(x_{ij}) = \left\{
        \begin{array}{ll}
            h^L_{ij}(x_{ij}) & \quad x_{ij} \leq x'_{ij} \\
            h_{ij}(x_{ij}) & \quad x_{ij} > x'_{ij}
        \end{array}
    \right. \label{eq:31}
\end{equation}

\begin{equation} 
     f^C_{ij}(x_{ij}) = \left\{
        \begin{array}{ll}
            f^L_{ij}(x_{ij}) & \quad x_{ij} \leq x'_{ij} \\
            f_{ij}(x_{ij}) & \quad x_{ij} > x'_{ij}
        \end{array}
    \right. \label{eq:32}
\end{equation}

As such, by replacing equations \eqref{eq:10} and \eqref{eq:13} with equations \eqref{eq:29} and \eqref{eq:30} respectively and adapting the upper bounds of equation \eqref{eq:26} for non-linear edges, \textit{Model 1} becomes a Convex Minimum Cost Flow (CMCF) optimization problem.

\subsection{Edge-Splitting Pseudo-Polynomial Algorithm} \label{sec: ALG}
It was previously mentioned that it is possible to transform our vehicle redistribution problem into a CMCF problem. The flow $x_{ij}$ is a discrete quantity in our model as it considers the count of vehicles in each link $(i,j)$. Linear Minimum cost flow models adhere to the following theorem, as stated in \cite{Ahuja1993}:

\begin{theorem} \label{integrality}
(Integrality Theorem) If the capacities of all edges and the balance values of all the nodes are integer, the linear minimum cost flow problem always has an integer optimal flow.
\end{theorem}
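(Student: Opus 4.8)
The plan is to reduce the statement to the total unimodularity of the node--arc incidence matrix of $G$. Writing \eqref{eq:21a}--\eqref{(21e)} in matrix form gives the linear program $\min\{c^\top x : Nx = b,\ 0 \le x \le u\}$, where $N$ is the $|V|\times|E|$ node--arc incidence matrix of $G$ --- each column, corresponding to an arc $(i,j)$, has exactly one $+1$ and one $-1$ (in the rows of its two endpoints, per \eqref{eq:6}) and zeros elsewhere --- while $b$ is the integral vector of balance values and $u$ the integral (possibly $+\infty$) vector of capacities; constraint \eqref{(21d)} is precisely $Nx=b$.

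First I would show that $N$ is totally unimodular, i.e.\ every square submatrix $N'$ has $\det N' \in \{-1,0,1\}$. This goes by induction on the order $k$ of $N'$. For $k=1$ the single entry lies in $\{-1,0,1\}$. For $k>1$: if some column of $N'$ is all zero then $\det N' = 0$; if some column of $N'$ has a single nonzero entry (which is $\pm 1$), then expanding the determinant along that column and applying the induction hypothesis to the resulting $(k-1)\times(k-1)$ minor gives $\det N' \in \{-1,0,1\}$; otherwise every column of $N'$ has exactly two nonzero entries, one $+1$ and one $-1$, so the rows of $N'$ sum to the zero vector and $\det N' = 0$. These cases are exhaustive.

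Next I would invoke the Hoffman--Kruskal consequence: since $N$ is totally unimodular and $b,u$ are integral, every vertex of the polyhedron $P = \{x : Nx=b,\ 0\le x\le u\}$ is integral. Concretely, at a vertex the arcs split into those pinned at a bound ($x_{ij}=0$ or $x_{ij}=u_{ij}$, hence integral) and the remaining ``free'' arcs $F$, for which the column submatrix $N_F$ has full column rank; solving the resulting square nonsingular integral system by Cramer's rule, together with $\det = \pm 1$ and integer cofactors, forces the free coordinates to be integral as well. Finally, $P \subseteq \{x \ge 0\}$ contains no line, so $P$ is pointed; hence whenever the problem is feasible and its objective bounded below --- the only situation in which an optimal flow exists at all --- the minimum is attained at a vertex of $P$, which by the above is an integer flow. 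Nothing changes for arcs with $u_{ij}=\infty$ as in \eqref{eq:18}, since no vertex can pin such an arc at a nonexistent upper bound.

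The only point needing genuine care is this last reduction: one must note that $P$ is pointed (guaranteed here by the nonnegativity constraints \eqref{(21b)}, \eqref{eq:17}) and that an optimal flow exists at all, since with $u_{ij}=\infty$ the objective could in principle be unbounded below, in which case the statement is vacuous. Everything else --- total unimodularity of incidence matrices and the vertex-integrality corollary --- is standard. A purely combinatorial alternative would maintain an integral feasible flow and repeatedly cancel negative-cost cycles in the residual network, augmenting by the integer bottleneck residual capacity; each such step preserves integrality, but establishing termination would additionally require integral costs, an assumption the theorem does not make, so I would keep the total-unimodularity route.
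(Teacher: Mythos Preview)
Your argument is correct and is the standard total-unimodularity route to the Integrality Theorem. Note, however, that the paper does not actually supply a proof of this statement: it merely cites \cite{Ahuja1993} and moves on. So there is no ``paper's own proof'' to compare against beyond that reference.

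For what it is worth, the treatment in \cite{Ahuja1993} typically argues via the spanning-tree structure of basic feasible solutions rather than invoking Hoffman--Kruskal by name: a basis of the network LP corresponds to a spanning tree, the tree arcs' flows are uniquely determined by the integral supplies/demands and the integral nontree-arc flows (each at $0$ or $u_{ij}$), and back-substitution along the tree keeps everything integral. Your total-unimodularity proof is equivalent in spirit---the $\pm 1$ determinants of square submatrices of $N$ are precisely what make the tree back-substitution produce integers---but is a bit more general-purpose, since it goes through any LP over a TU constraint matrix. One small point you glossed over: $N$ has rank $|V|-1$ on each connected component (the rows sum to zero), so to extract a genuinely square nonsingular system for the free arcs you must first drop one redundant row per component; after that your Cramer's-rule step is clean. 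Your caveat about unbounded objectives when some $u_{ij}=\infty$ is also well placed.
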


For proof of the above theorem, we refer readers to \cite{Ahuja1993}. We thus deduce by restricting the parameters of the problem to integers (capacities and balance vectors), we can solve the linear minimum cost flow problem in polynomial time. We aim to exploit the integrality theorem, using an appropriate linearisation technique, to solve the CMCF problem transformation of \textit{Model 1} in polynomial time.

The CMCF problem has been previously tackled efficiently in the literature. \cite{minoux1984polynomial} initially proposed an extension of the scaling method for linear minimum cost flows presented in \cite{edmonds1972theoretical}, for convex cost flows with quadratic functions. At a subsequent stage, \cite{minoux1986solving} and \cite{hochbaum1990convex} separately conducted studies on solving minimum cost flows with general convex objectives. A variant of the algorithm proposed by \cite{minoux1986solving}, and along the lines of \cite{hochbaum1990convex}, is featured in \cite{Ahuja1993}. \cite{orlin2013fast} proposed polynomial algorithms for solving the CMCF problem in circles, lines or trees. The problem of quadratic CMCF was also tackled more recently in \cite{Vegh2016}, using an enhanced version of \cite{minoux1986solving}, utilizing the technique for linear minimum cost flows proposed in \cite{orlin1993faster}.

A consistent assumption of the studies which efficiently address the CMCF problem is that the edge costs are non-negative. Nonetheless, as we have seen in sections \ref{NMCF} and \ref{CMCF}, the non-linear cost functions of \textit{Model 1} can have negative values. The notion of negative costs (i.e. profit), implies that in an optimal solution of \textit{Model 1}, there would be edges with negative costs. As such, we refrain from using the above algorithms, and instead, we incorporate a modified version of the pseudo-polynomial algorithm for CMCF presented in \cite{Ahuja1993}. 

Our algorithm applies piece-wise linearisation by introducing parallel linear edges for each non-linear edge in the network. To limit the amount of additional parallel edges in the network, we start the algorithm with only two parallel edges for each non-linear edge $(i,j)$. As observed in equations \eqref{eq:31} and \eqref{eq:32}, $h^C_{ij}(x_{ij})$ and $f^C_{ij}(x_{ij})$ are partly linearised (i.e. for $x \leq x'_{ij}$). Consequently, by replacing $h_{ij}(x_{ij})$ and $f_{ij}(x_{ij})$ with their linearised versions between $x'_{ij}$ and $x^*_{ij}$, we initiate our algorithm with two parallel linear edges for each non-linear edge of \textit{Model 1}. Figure \footnote{For figure \ref{fig:3} we used the following parameters: $\bar{v}=0.3$, $\alpha=2$, $\bar{w}=5$, $p=1$, $C_M=0.3$.} \ref{fig:3} outlines the initial linearisation of costs for non-linear edges $(i,j) \in \mathcal{A}$. The linearisation process for edges $(i,j) \in \mathcal{B}$ is similar.

\begin{figure}[H]
\centering
\includegraphics[width=0.48\textwidth]{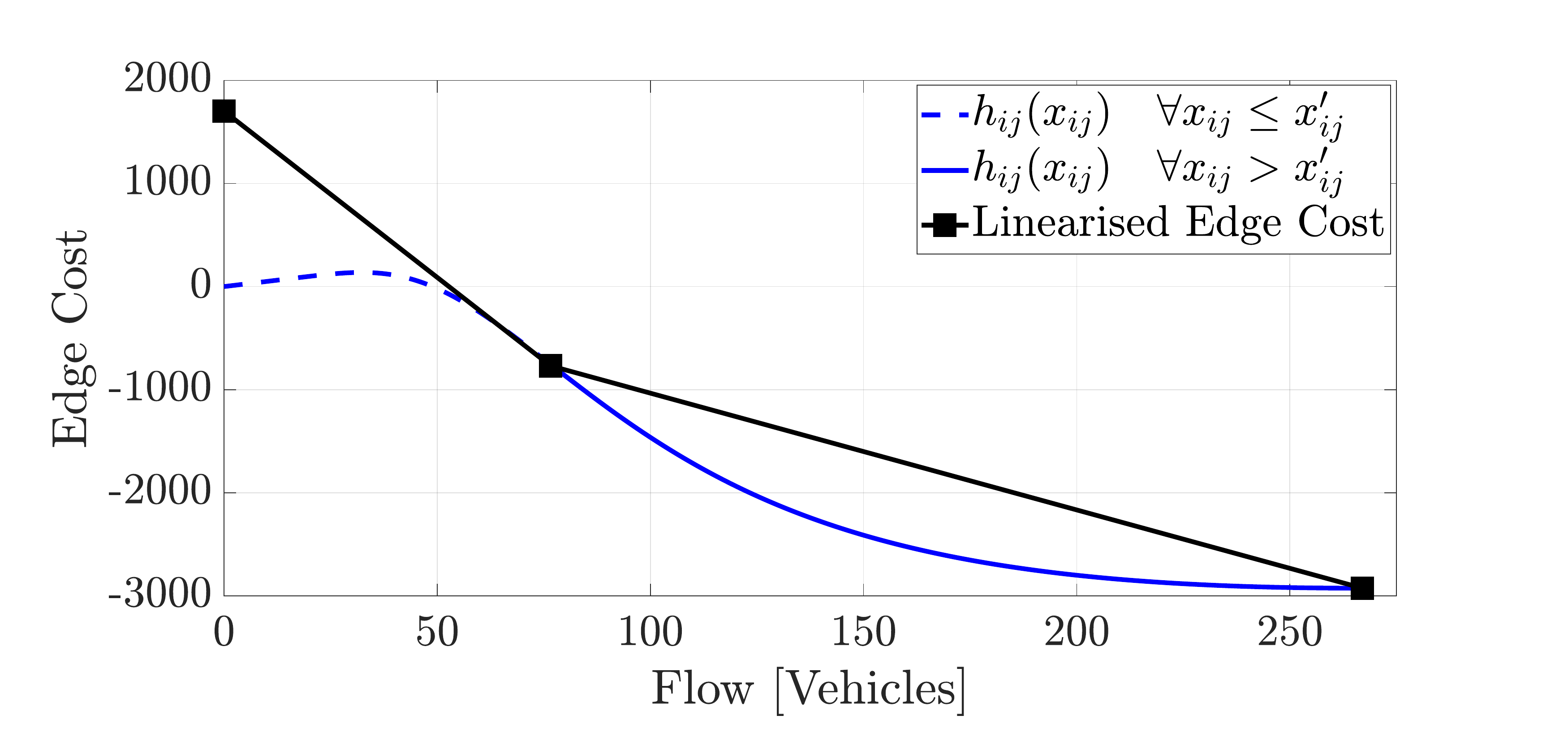}
\caption{Cost Function Variation for Edges $(i,j) \in \mathcal{A}$.}
\label{fig:3}
\end{figure}
  
As observed in figure \ref{fig:3}, due to convexity, the slope of each parallel linear edge from left to right gradually increases, from a minimum negative value to zero, as we move from $x_{ij}=0$ to $x_{ij}=x^*_{ij}$. Consequently, per unit flow is more expensive through the parallel linear edge corresponding to the non-linear section for $x_{ij} > x'_{ij}$. As such, assuming both parallel edges have residual capacity, flow through parallel edges with smaller slope is always prioritised over edges with larger slope value (i.e. closer to zero). This observation is described as the property of contiguity in \cite{Ahuja1993}.

Utilizing contiguity, for each parallel linear edge, we set the upper bound (capacity), to the difference between the right and left flow boundaries of the linearised edge, while maintaining zero lower bounds. Consequently, for our initial linearisation configuration, the upper bounds will be $u^1_{ij}=x'_{ij}-0$ and $u^2_{ij}=x^*_{ij} - x'_{ij}$ from left to right respectively, as observed in figure \ref{fig:3}. Since we initiate our algorithm with two parallel linear edges per non-linear edge, we denote their linear cost functions as $c^1_{ij}$ and $c^2_{ij}$ for edges corresponding to $x \leq x'_{ij}$ and $x_{ij} > x'_{ij}$ respectively. We show the transformation of each non-linear edge to a pair of linearised ones in figure \ref{fig:5}.

\begin{figure}[H]
\centering
\includegraphics[width=0.48\textwidth]{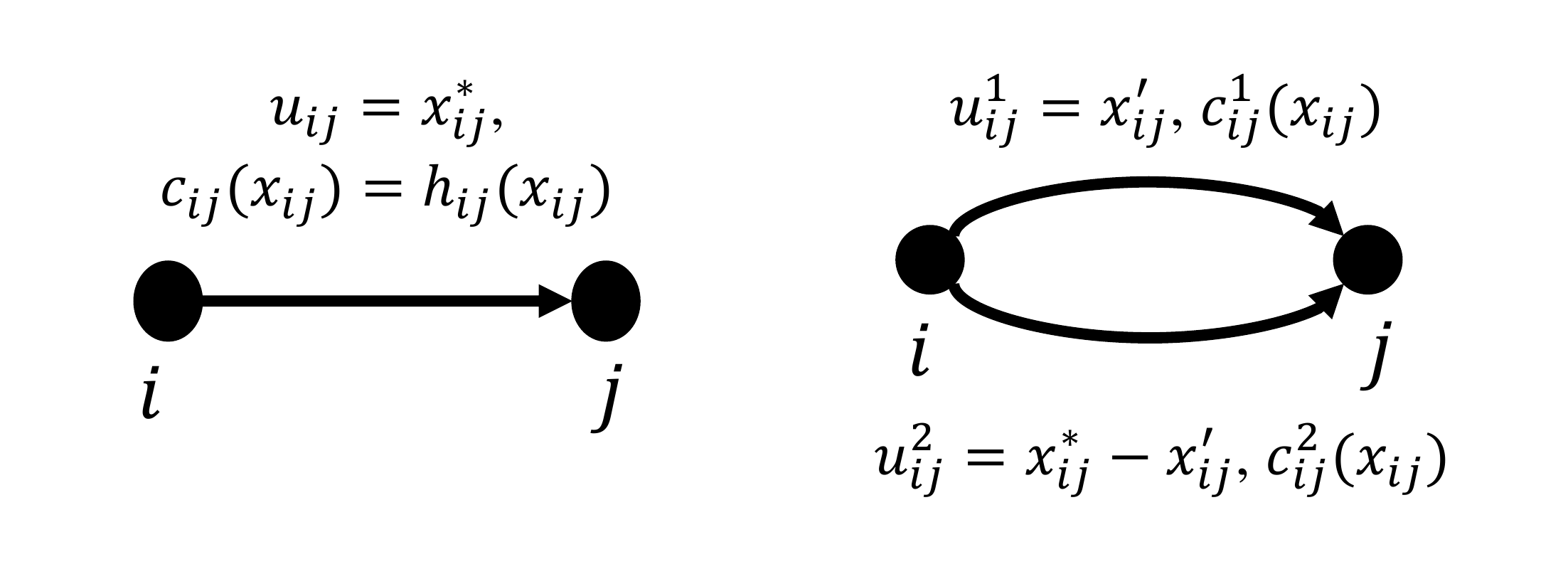}
\caption{Linear transformation of a non-linear edge $(i,j)$ to linear edges $(i,j)^1$ and $(i,j)^2$ with resulting linear costs $c^1_{ij}$ and $c^2_{ij}$ respectively.}
\label{fig:5}
\end{figure}

To maintain the validity of Theorem \ref{integrality}, we consider integer versions of $x'_{ij}$ and $x^*_{ij}$, such that the resulting capacities $u^1_{ij}$ and $u^2_{ij}$ are also integers. Furthermore, to identify point $x'_{ij}$ for each non-linear edge in \textit{Model 1}, we approximate the first and second derivatives of the non-linear cost using the forward and central difference formulas respectively. 

We introduce function $P(i,j)$ which identifies the set of parallel edges between any pair of vertices $i,j \in V$. We also refer to the linearised version of $G=(V, E)$ as $G_L=(V, E_L)$. Our non-linear edge linearisation of \textit{Model 1} up to this point can be described via the following formulation:

\textit{Model 2}:
\begin{mini!}[1]{}{\sum_{(ij) \in E}\sum_{k=1}^{P(i,j)} c^k_{ij} (x^k_{ij}) \label{eq:33a}}{}{}
    \addConstraint{x^k_{ij}}{\geq l^k_{ij}\qquad\forall (i,j) \in E,\forall k \in P(i,j)}{}{\label{(33b)}}
    \addConstraint{x^k_{ij}}{\leq u^k_{ij}\qquad\forall (i,j) \in E,\forall k \in P(i,j)}{}{\label{(33c)}}
    \addConstraint{b(j)}{=\sum_{i:ji \in E}\sum_{k=1}^{P(j,i)} x^k_{ji} - \sum_{i:ij \in E}\sum_{k=1}^{P(i,j)} x^k_{ij}\quad\forall j \in V}{}{\label{(33d)}}
    \addConstraint{x^k_{ij}}{\in \mathbb{R}\qquad\forall (i,j) \in E,\forall k \in P(i,j)}{}{\label{(33e)}}
\end{mini!}

A high level structure of our CMCF algorithm is outlined in Algorithm \ref{CMCF Alg}. Algorithm \ref{CMCF Alg} utilises graphs $G$ and $G_L$, and the structure of \textit{Model 2} to identify the set of minimum cost flows $F$, such that $x^k_{ij} \in F$, with $(i,j)^k \in E_L$. Algorithm \ref{CMCF Alg} incorporates an iterative procedure to arrive at the optimal solution for the CMCF of graph $G$.

\begin{algorithm}
\caption{CMCF Edge-Splitting Algorithm} \label{CMCF Alg}
\begin{algorithmic}[1]
\State Inputs: \textit{Model 2}, graph $G_L$ and graph $G$
\State $OPT =$ false
\While{$OPT =$ false}
\State $F \gets \emptyset$
\State $F = NetworkSimplex(F, G_L, \textit{Model 2})$
\State $U = SplittableEdges(F, G_L)$
\If{$U=\emptyset$}
\State $OPT=$ true
\Else
\For{$(i,j)^k \in U$}
\State $G_L = Split(G, G_L, (i,j)^k)$
\EndFor
\EndIf
\EndWhile
\State Output: Flow $F$
\end{algorithmic}
\end{algorithm}

To facilitate our edge-splitting algorithm, we define the boolean variable $OPT$ with the default value set to false, which signals the algorithm to stop if an optimal solution is found during an iteration (i.e. if $OPT$ is true). At the beginning of each iteration, we use the network simplex algorithm \cite{Ahuja1993} to solve the minimum cost flow problem and obtain a set $F$ of flows $x^k_{ij}$. We thereby screen through flows $x^k_{ij} \in F$ using the function $SplittableEdges(F, G_L)$, to identify the set $U$ of linearised parallel edges which are subject to further splitting. If $U$ is an empty set, the set $F$ of flows $x^k_{ij}$ is an optimal solution to the CMCF version of \textit{Model 1}. Otherwise, we proceed with splitting each edge $(i,j)^k \in U$ using the routine $Split(G, G_L, (i,j)^k)$, which updates the linearised graph $G_L$ and move to the next iteration. 

We use the function $NetworkSimplex(F, G_L, \textit{Model 2})$ to denote the procedure of solving \textit{Model 2} using network simplex and populating set $F$. We omit presentation of the network simplex algorithm as it is well known in the literature. The routine followed for the $SplittableEdges(F, G_L)$ function is outlined in Algorithm \ref{Splittable}. As observed in Algorithm \ref{Splittable}, we investigate the flow of each parallel edge $(i,j)^k$ of graph $G_L$ which exists in the convex domain of edge $(i,j)$ of graph $G$ (i.e. $k>1$). If the edge $(i,j)^k$ is the first edge in $(i,j)$ which is not at capacity and can be divided, we add edge $(i,j)^k$ to the set $U$.

The $Split(G, G_L, (i,j)^k)$ function used in Algorithm \ref{CMCF Alg} is outlined in Algorithm \ref{Split}. Initially, we identify the corresponding upper ($x_U$) and lower ($x_L$) flow values of edge $(i,j)^k$ in the convex domain of edge $(i,j)$. Since the flow $F$ is contiguous, we can find $x_U$ and $x_L$ by finding the total flow in $(i,j)$, and the total flow in $(i,j)$ excluding $x^k_{ij}$ respectively. We then identify the split point $x_S$ as the midpoint\footnote{$\ceil[big]{X}$ denotes the ceiling function.} of $x_U$ and $x_L$. We thus remove the edge $(i,j)^k$ and append $P(i,j)$ with the indices of the two new parallel edges to be added. For each new parallel edge, we find its upper and lower bound, as well as its cost function in a similar fashion as described earlier in the construction of $G_L$ before initialising Algorithm \ref{CMCF Alg}. We conclude the split function by adding the two new parallel edges in $E_L$ and returning the updated graph $G_L$.

\begin{algorithm}
\caption{SplittableEdges Function} \label{Splittable}
\begin{algorithmic}[1]
\State Inputs: Set $F$ of flows $x^k_{ij}$, graph $G_L$
\State $U \gets \emptyset$
\For{$(i,j) \in ((K \times C) \cup (D \times t))$}
\For{$k \in P(i,j)\setminus k=1$}
\If{$(l^k_{ij} \leq x^k_{ij} < u^k_{ij}) \cup (x^{k-1}_{ij} = u^{k-1}_{ij})$}
\If{$(u^k_{ij} > 1)$}
\State $U \gets U \cup (i,k)^k$ 
\EndIf
\EndIf
\EndFor
\EndFor
\State Output: Set $U$ of splittable edges
\end{algorithmic}
\end{algorithm}

\begin{algorithm}
\caption{Split Function} \label{Split}
\begin{algorithmic}[1]
\State Inputs: Graph $G$, graph $G_L$ and edge $(i,j)^k$
\State $x_U = \sum_{n \in P(i,j)} x^n_{ij}$
\State $x_L = \sum_{n \in P(i,j) \setminus n=k} x^n_{ij}$
\State $x_{S} = \ceil[\big]{\frac{x^U_{ij} + x^L_{ij}}{2}}$
\State $E_L \gets E_L\setminus (i,j)^k$
\State $n = max(P(i,j))$
\State $P(i,j) \gets P(i,j) \cup (n+1) \cup (n+2)$
\State $u^{n+1}_{ij} = x_S - x_L$
\State $u^{n+2}_{ij} = x_U - x_S$
\State $l^{n+1}_{ij} = 0$
\State $l^{n+2}_{ij} = 0$
\State Define $c^{n+1}_{ij}(x)$  by finding the linear equation between points $[x_L, c_{ij}(x_L)]$ and $[x_S, c_{ij}(x_S)]$.
\State Define $c^{n+2}_{ij}(x)$ by finding the linear equation between points $[x_S, c_{ij}(x_S)]$ and $[x_U, c_{ij}(x_U)]$.
\State $E_L \gets E_L \cup (i,j)^{n+1} \cup (i,j)^{n+2}$
\State Output: Updated linearised graph $G_L$
\end{algorithmic}
\end{algorithm}

The rationale behind our solution method, as described in Algorithms \ref{CMCF Alg}-\ref{Split}, is that we keep splitting linearised edges in the convex domain until we satisfy some optimality conditions. Specifically, we obtain the optimal solution when all the flows in each of the linearised edges in this domain are either zero or equal to the upper bound, and no further splitting can induce incremental cost savings. 

If the total input flow (i.e. $\sum_{i \in A} S_i$) is large enough, Algorithm \ref{CMCF Alg} allocates the upper bound flow in each of the non-linear edges of $G$ due to their negative costs (profitable edges). The case described above would terminate after the first iteration with the optimal solution of the CMCF version of \textit{Model 1}. Otherwise, for each non-linear edge, the first parallel edge which is not at capacity is split into two parts. For any split edge, due to convexity, the flow in the next iteration would always be confined within the resulting pair of parallel edges.

As a result of the above description, Algorithm \ref{CMCF Alg} terminates when for each non-linear edge, the splitting procedure produces parallel edges of unit capacity (i.e. $u^k_{ij}=1$). Consequently, the number of iterations is logarithmic and relates to the maximum interval $x^*_{ij} - x'_{ij}$ out of all the non-linear edges. If we denote this maximum interval as $\Delta$, we need to solve the minimum cost flow problem $\log_2(\Delta)$ times, adding at most $|J|^3 + |J|$ parallel edges to $G_L$ at each iteration. 

We can express the cardinality $m$ of the set of edges $E$ in terms of $|J|$; hence each network simplex run is polynomially bounded by the number of variables of the original problem. However,  we cannot express $\Delta$ by the number of variables in the network $G$. Consequently, Algorithm \ref{CMCF Alg} runs in pseudo-polynomial time. Nonetheless, even in extreme practical cases $\log_2(\Delta)$ is a small number (i.e. for $\Delta=10000$, $\log_2(\Delta)\approx 13$), hence our algorithm can be applied in practical implementations.

\section{Discussion} \label{sec: III}
\subsection{Agent-Based Model}
We tested the effectiveness of our redistribution algorithm in a simulated ride-sourcing environment using the ABM framework specified in \cite{karamanis2020abm} with a first-in-first-out (FIFO) customer assignment policy. Customer choices in the simulator were decided on an individual level using equations \eqref{eq:1} and \eqref{eq:3} with individual wait and travel times identified during the simulation. Additional to the framework in \cite{karamanis2020abm}, we included an empty relocation state for vehicles, during which vehicles can still be assigned to customers.

We selected the area of Manhattan, NYC to apply a case study of the algorithm due to the comprehensive trip data-set available in \cite{TLC2019} which served as our demand input. Travel times in the network were calculated using the OSMnx library \cite{BOEING2017126}. By assuming a small proportion of traffic attributes to ride-sourcing, we omitted endogenous congestion in our ABM. Nonetheless, we accounted for exogenous congestion by applying a 20\% penalty to the free-flow speeds in residential and motorway link segments, and 40\% elsewhere during peak hours.

Using the data-set in \cite{TLC2019}, we created typical demand profiles for weekdays in Manhattan, NYC  from 05:00 am to 12:00 am. K-means clustering was used to split the road-network into twenty clusters, with a universal value of $\alpha$ (equation \eqref{eq:2 new}) for convenience. To benchmark the performance of our algorithm, we tested it against the case of no redistribution and also against a linear programming (LP) method from the state-of-the-art which does not assume any supply-demand elasticity, namely the rebalancing model in \cite{wallar2018vehicle}. We selected the study in \cite{wallar2018vehicle} as it has been tested against other central rebalancing methods such as \cite{alonso2017demand}. All algorithms were tested using different fleet sizes from 2500 to 15000 vehicles.

For this study, we used UK estimates of the value of time $\bar{v}$ and vehicle moving costs $C_M$ from \cite{DfT2018} as information on values of time for New York was not available. Consequently, the average value of time $\bar{v}$ was set to 17.69 GBP/hour. The vehicle moving cost $C_M$ was set to the conservative estimate of 12.96 GBP/hour to reflect current driver valuations. The idle vehicle cost $C_I$ was set to 1 GBP/hour to account for parking costs, whereas the price per minute for a ride $p$ was set to 1.00 GBP/min to reflect previous research on AV pricing \cite{Karamanis2018}. The value of $\bar{w}$ was set to 5 (minutes). 

Our algorithm performs allocations based on demand expectations for two subsequent periods ($Z^1_{ij}$ and $Z^2_{ij}$). Since the application of predictive algorithms is beyond the scope of this paper, we assumed that the platform has complete knowledge of the demand in the two subsequent periods for each cluster when applying our algorithm and the model in \cite{wallar2018vehicle}. Nonetheless, to scrutinise how demand prediction accuracy affects the efficacy of our algorithm, we also considered an instance with a normally distributed error with mean $20\%$ and standard deviation $10\%$ on the total demand $Z^t_{ij}$. 

To investigate the robustness of our method and the effectiveness of the window length $\tau$, we created instances with different values of $\alpha$ and $\tau$, respectively. The tested instances and parameters are outlined in table \ref{tab:instances}. Where LP in table \ref{tab:instances}, refers to the model of \cite{wallar2018vehicle} and $a_3$ represents the version of \textit{Model 2} with a random error on total demand $Z^t_{ij}$.

\begin{table}[H]
\caption{Instances tested in the simulation.}
\centering
\label{tab:instances}
\begin{tabular}{@{}lllll@{}}
\toprule
ID    & Method  & Clusters & $\tau$ [min] & $\alpha$ \\ \midrule
$a_0$ & Model 2 & 20       & 30     & 0.75     \\
$a_1$ & Model 2 & 20       & 15     & 0.75     \\
$a_2$ & Model 2 & 20       & 30     & 1.5      \\
$a_3$ & Model 2 & 20       & 30     & 0.75     \\
$b_1$ & LP      & 20       & 30     &          \\
$b_2$ & LP      & 20       & 15     &          \\
$b_3$ & LP      & 20       & 5      &          \\ \bottomrule
\end{tabular}
\end{table}

\subsection{Performance Analysis}
We used four different Key Performance Indicators (KPIs) to compare the effectiveness of each set-up outlined in table \ref{tab:instances} with the case of no redistribution. Specifically, we used wait time reduction, market share improvement, profit improvement and additional mileage, expressed as percentage shifts from the no redistribution benchmark. 

As observed in figures \ref{fig:7}-\ref{fig:10}, we find that frequent redistribution degrades the efficacy of all tested methods. Our algorithm was found to be ineffective for $\tau = 15$ minutes, whereas the LP benchmark algorithm presented noticeable degradation for a window length of 5 minutes. This ineffectiveness is expected, as both algorithms use central optimization and a refined window length deteriorates the quality of aggregation. We, therefore, deduce that our algorithm is best fitted for low frequency in redistribution (i.e. every 30 minutes).

As observed in figure \ref{fig:7}, vehicle redistribution via our algorithm can reduce the average customer wait time up to more than 50\% when compared to no redistribution. Furthermore, when considering market share, our algorithm contributes to an increase of up to 10\%, as shown in figure \ref{fig:8}. However, as the fleet size increases to larger values, the contribution of redistribution towards higher market share and reduced wait times becomes marginal due to the abundance of available vehicles across the network.

\begin{figure}[h]
\centering
\includegraphics[width=0.48\textwidth]{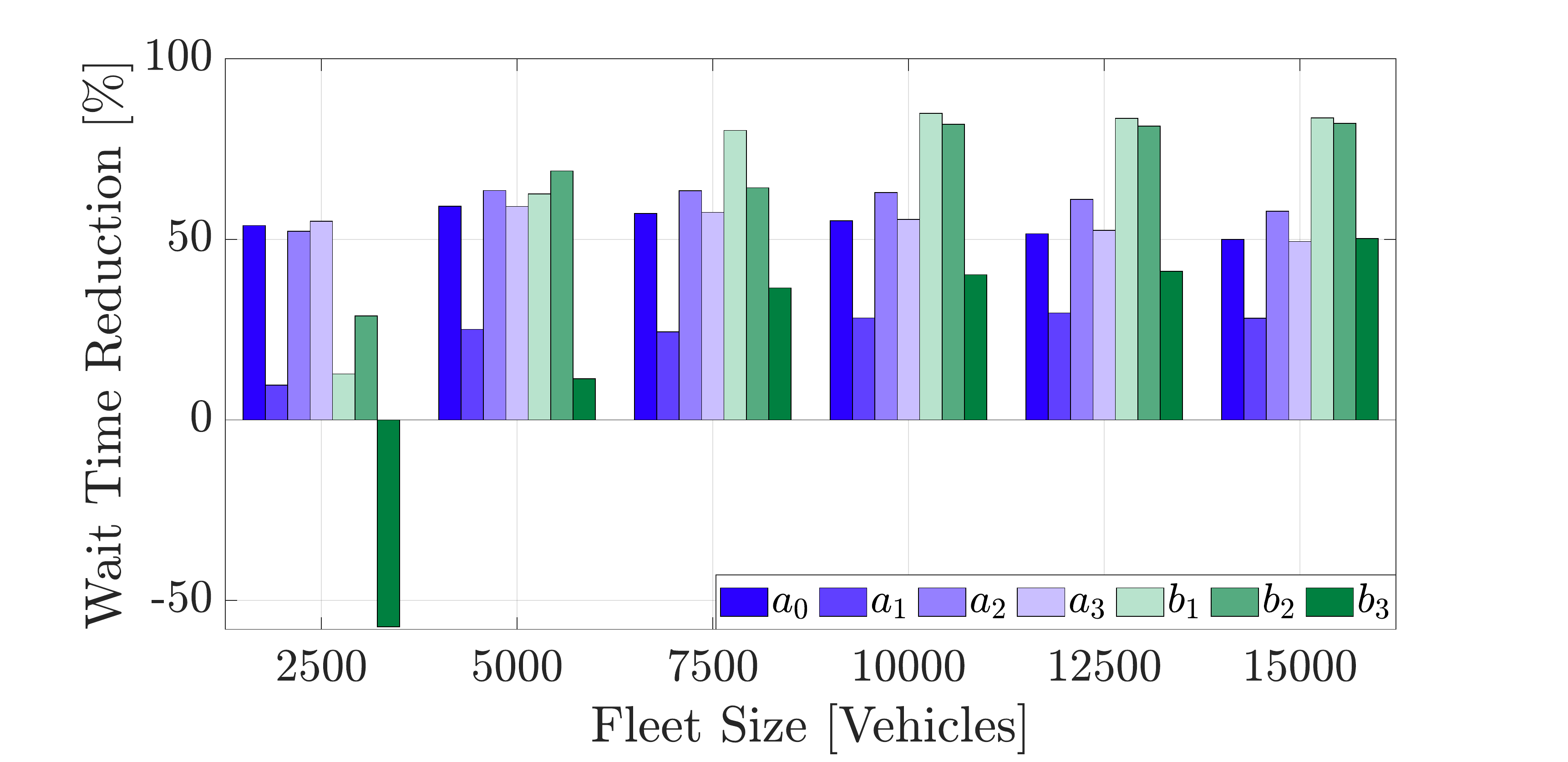}
\caption{Average wait time reduction of customers for different fleet sizes and redistribution strategies.}
\label{fig:7}
\end{figure}

\begin{figure}[h]
\centering
\includegraphics[width=0.48\textwidth]{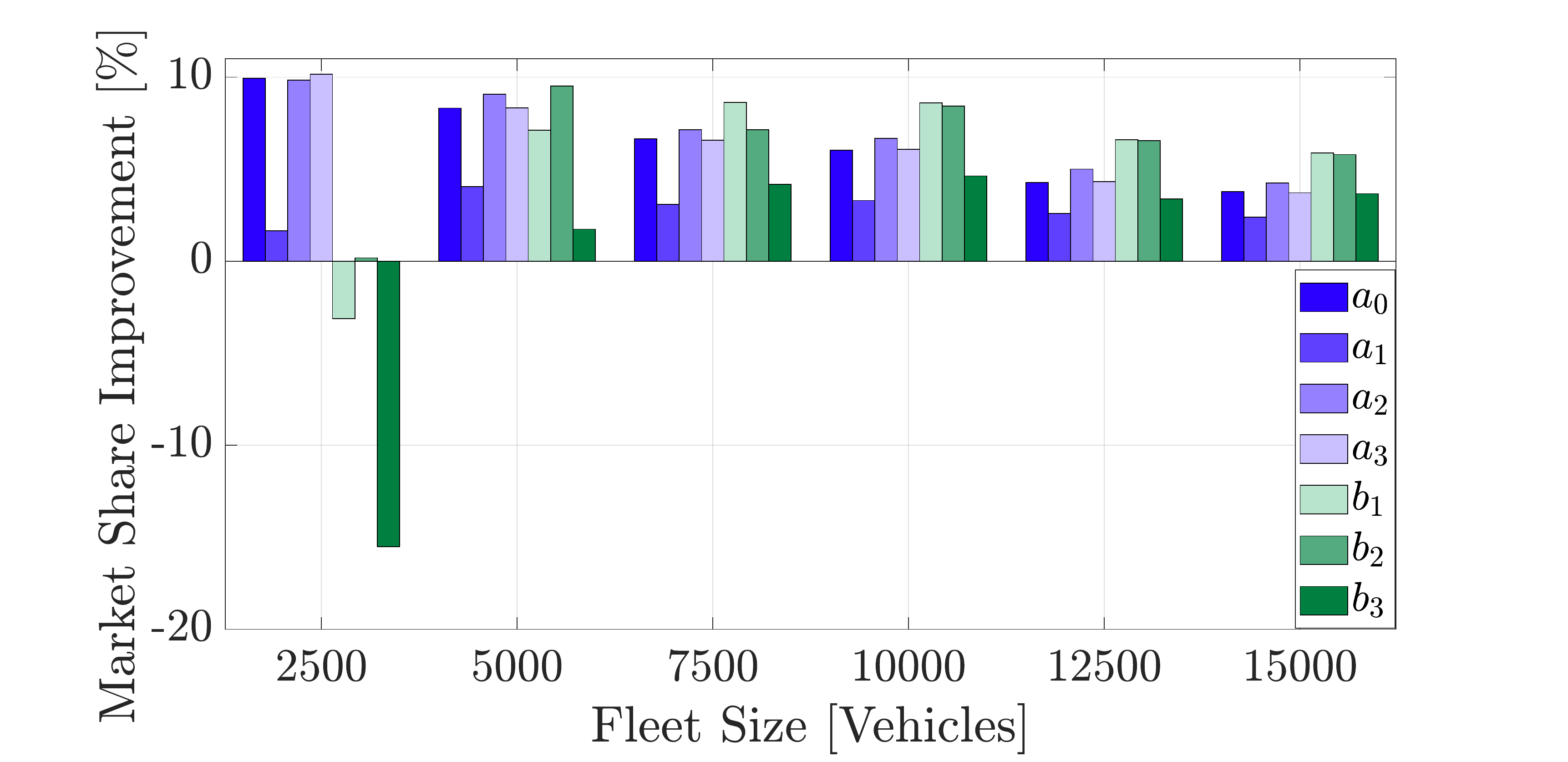}
\caption{Market share improvement for different fleet sizes and redistribution strategies.}
\label{fig:8}
\end{figure}

Comparing the results of our algorithm with the method proposed in \cite{wallar2018vehicle} and ignoring smaller window lengths ($a_1$ and $b_3$), we see that in many cases (above 2500 vehicles) the LP method outperforms \textit{Model 2} in reducing wait times and improving market share (figures \ref{fig:7} and \ref{fig:8}). Nonetheless, as our algorithm is modelled to minimise cost based on expected customer choices, the effectiveness of our algorithm over LP is clearly stated in profit improvement and additional mileage as observed in figures \ref{fig:9} and \ref{fig:10} respectively. 

Observing figures \ref{fig:9} and \ref{fig:10}, we see that our algorithm can achieve a sizeable profit increase of up to 10\% with an additional mileage of 20\% across the network. On the contrary, the LP algorithm achieves much lower wait times and higher market share in the expense of considerable additional mileage and a significant profit reduction (more than 10\% in most cases). This result is anticipated, as there is no notion of diminishing returns for additional redistribution in the LP method, which is the reason for non-linearity in our method. 

\begin{figure}[h]
\centering
\includegraphics[width=0.48\textwidth]{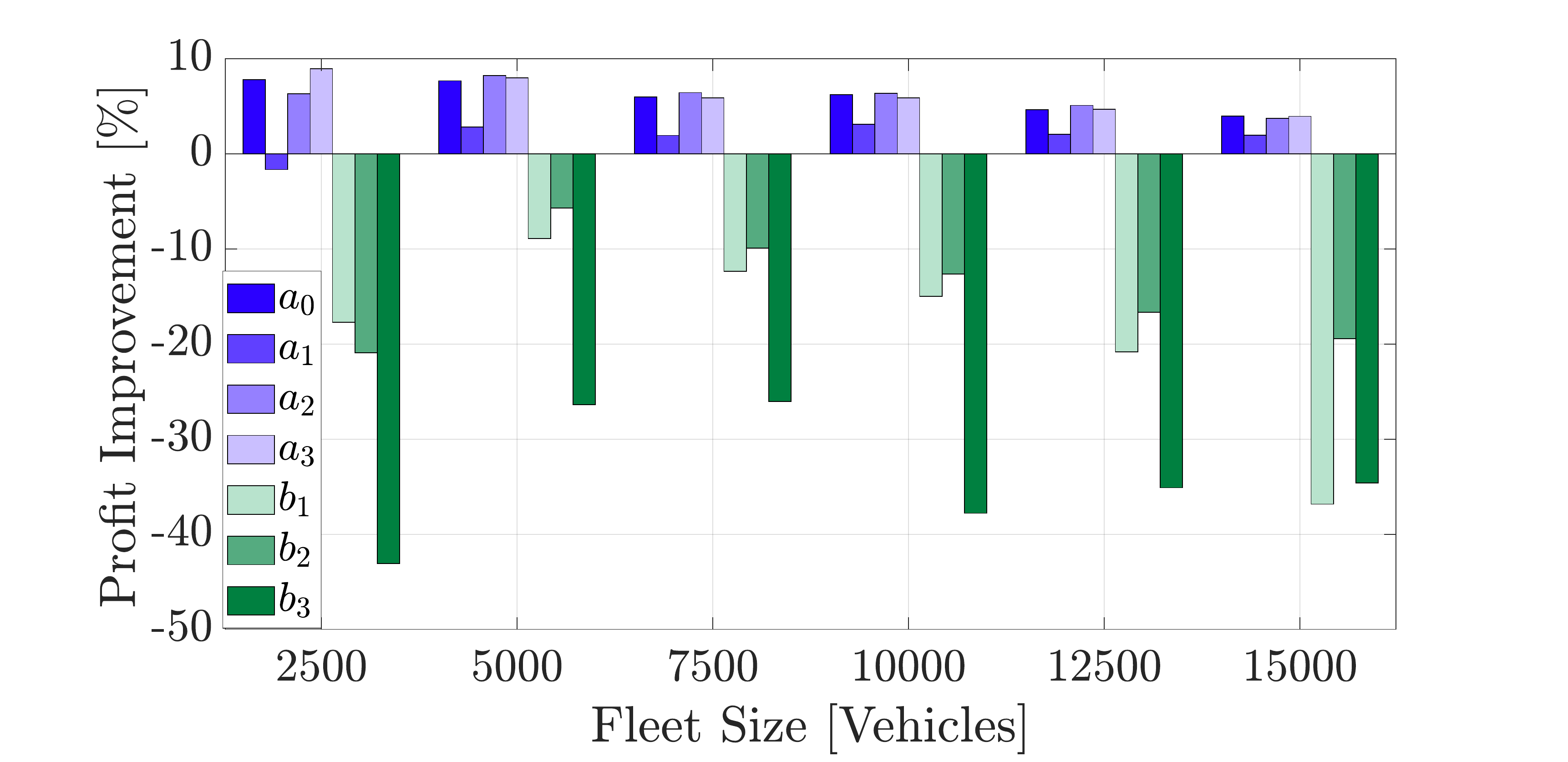}
\caption{Profit improvement for different fleet sizes and redistribution strategies.}
\label{fig:9}
\end{figure}

\begin{figure}[h]
\centering
\includegraphics[width=0.48\textwidth]{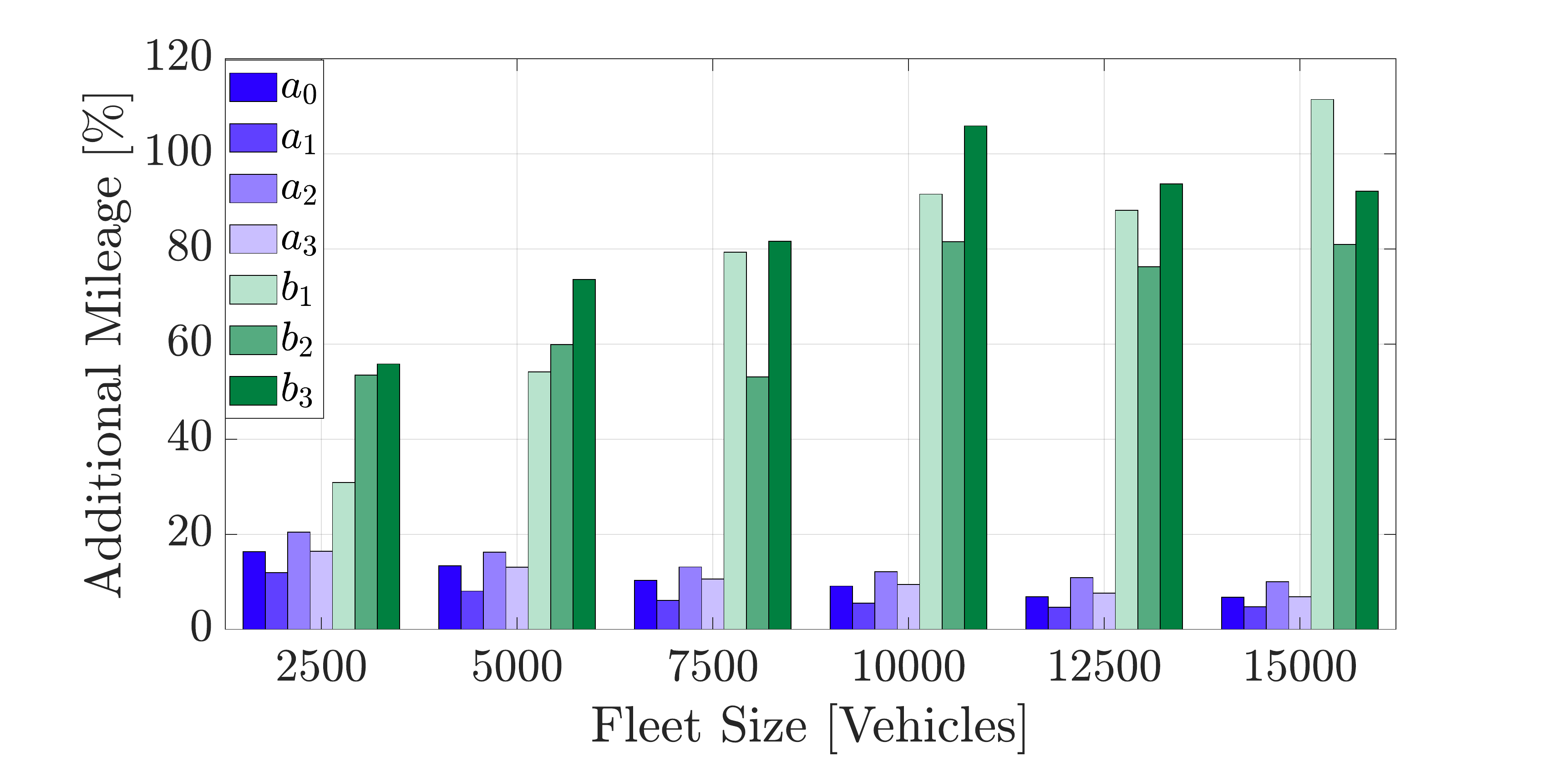}
\caption{Additional mileage for different fleet sizes and redistribution strategies.}
\label{fig:10}
\end{figure}

\subsection{Practical Implementation}
To assess the usefulness of \textit{Model 2} in practical implementations, we assess its robustness with regards to input functions such as equation \eqref{eq:2 new} and uncertainty in demand estimation. We see that varying $\alpha$ in equation \eqref{eq:2 new} from 0.75 to 1.50 does not have a significant effect on the resulting redistribution. As \eqref{eq:2 new} is only used by the operator to estimate the effect of vehicle supply on the average wait time, variations of the wait time function can change the capacity of trip links in \textit{Model 2}. Nonetheless, as the wait time change for either value is marginal when considering the large sizes of incoming demand and available vehicles in \eqref{eq:2 new}, so is the change across all KPIs when comparing instance $a_0$ to instance $a_2$.

Furthermore, when comparing the effect of a randomly distributed error on demand estimation (instance $a_3$), we see that the performance of $a_3$ is consistently similar to both $a_0$ and $a_2$ across all KPIs. This similarity is most likely present due to setting an upper bound on the capacity of trip links in \textit{Model 2} and allowing for idle vehicles in each cluster to be available for assignment in-between redistribution decisions.

Finally, to confirm the efficiency of Algorithm \ref{CMCF Alg}, which solves \textit{Model 2}, we measured its runtime for each fleet size in scenario $a_0$. We find that in the worst case, our algorithm requires an average of 74.7 seconds to identify the optimal redistribution, which we consider acceptable if a central fleet redistribution decision is made in windows of 30 minutes. The average runtimes for all tested fleet sizes are outlined in table \ref{tab:runtimes}.

\begin{table}[h]
\centering
\caption{Algorithm \ref{CMCF Alg} average runtimes for $a_0$ scenario and various fleet sizes.}
\label{tab:runtimes}
\begin{tabular}{@{}lllllll@{}}
\toprule
\begin{tabular}[c]{@{}l@{}}Fleet\\ Size\end{tabular}        & 2500 & 5000 & 7500 & 10000 & 12500 & 15000 \\ \midrule
\begin{tabular}[c]{@{}l@{}}Runtime\\ {[}sec{]}\end{tabular} & 74.7 & 23.8 & 23.8 & 24.2  & 24.1  & 23.8  \\ \bottomrule
\end{tabular}
\end{table}

\section{Conclusion} \label{sec: IV}

In this paper, we addressed the problem of vehicle redistribution in autonomous ride-sourcing markets to mediate supply-demand imbalance across a road-network. We used network theory to transform the vehicle redistribution problem into a CMCF problem with negative costs, accounting for customer behaviour under an assumed market structure. Our proposed edge-splitting algorithm solves the CMCF problem exactly in pseudo-polynomial time by allocating vehicles to spatio-temporal tasks. We demonstrated the practicability of our redistribution algorithm in an agent-based model simulating ride-sourcing in a large urban setting, such as Manhattan, NYC.

Our suggestions for future research in this area are several. First, we believe that transportation providers should quantify the effects of localised demand choice model structures in vehicle redistribution models. The underlying city/road-network structure can influence parameters such as wait time variation and cluster size. As such, studies which test redistribution algorithms on different networks could add value to current research. Finally, we note that our model considers optimizing the operations of privately owned fleets from an operator perspective. We thus believe that studies which test vehicle redistribution of publicly owned fleets from the perspective of welfare maximization could be useful.

\bibliographystyle{IEEEtran}
\bibliography{Bibliography}

\begin{IEEEbiography}[{\includegraphics[width=1in,height=1.25in,clip,keepaspectratio]{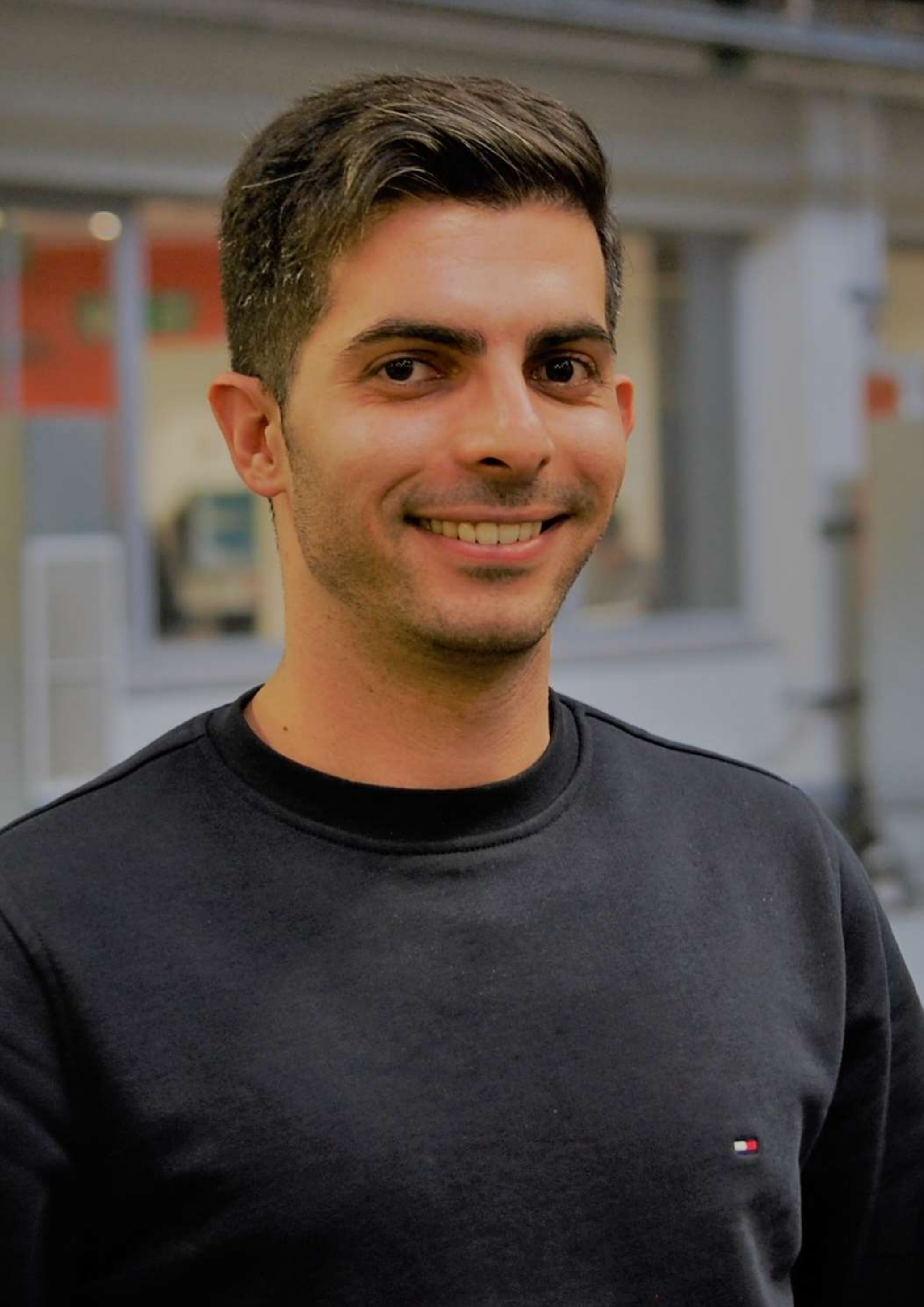}}]{Renos Karamanis}
received an M.Eng. degree in Civil and Environmental Engineering from Imperial College London in 2014. From 2014 to 2015 he worked as an engineering consultant for Mott MacDonald, a multi-disciplinary consultancy with headquarters in the United Kingdom. 

In 2015 he joined the Centre for Transport Studies in Imperial College London where he is currently a Ph.D. student. His research interests include developing operational research methods to improve the efficiency of pricing, assignment and vehicle redistribution of autonomous ride-sourcing fleets, and simulation modelling of ride-sourcing operations to aid policy suggestions.
\end{IEEEbiography}
\vskip -2\baselineskip plus -1fil
\begin{IEEEbiography}[{\includegraphics[width=1in,height=1.25in,clip,keepaspectratio]{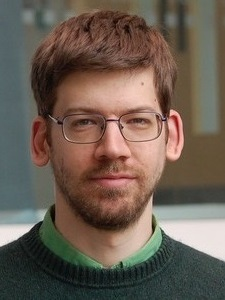}}]{Eleftherios Anastasiadis}
received a BSc in informatics and telecommunications from the University of Athens in 2011. He received an MSC and a PhD in theoretical computer science from the University of Liverpool in 2012 and 2017 respectively. 

In 2017 he worked as quality assurance engineer in Rosslyn Data Technologies Ltd. Since 2018 he is a postdoctoral researcher at the Transport Systems and Logistics Laboratory in the department of Civil and Environmental Engineering at ImperialCollege London. His research interests include approximation algorithms and mechanism design for network optimisation problems, and agent-based simulation for autonomous vehicle fleets. 
\end{IEEEbiography}
\vskip -2\baselineskip plus -1fil
\begin{IEEEbiography}[{\includegraphics[width=1in,height=1.25in,clip,keepaspectratio]{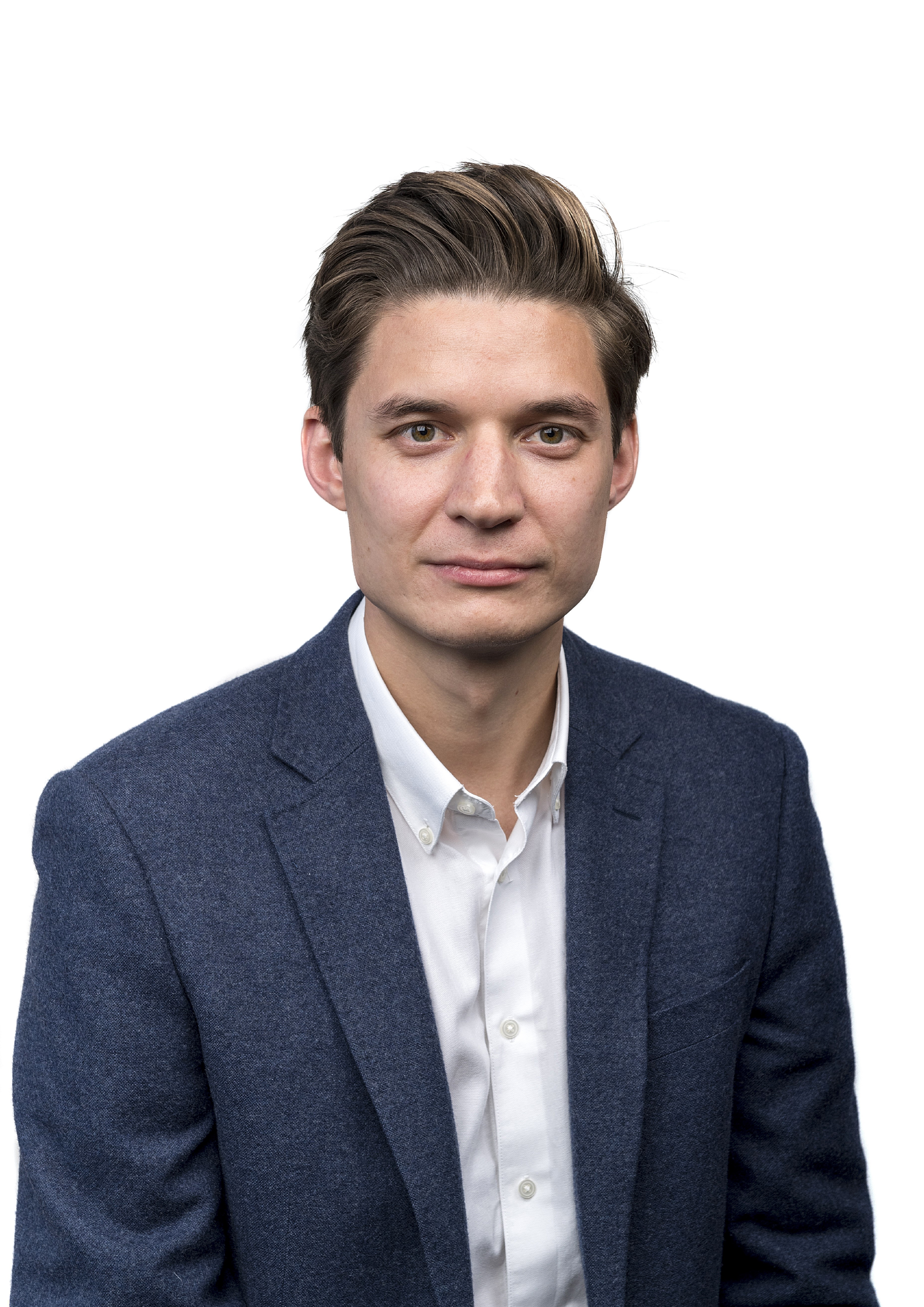}}]{Marc Stettler}
is a Senior Lecturer in Transport and the Environment in the Centre for Transport Studies and Director of the Transport \& Environment Laboratory. Prior to joining Imperial, Marc was a research associate in the Centre for Sustainable Road Freight and Energy Efficient Cities Initiative at the University of Cambridge, where he also completed his PhD.

His research aims to quantify and reduce environmental impacts from transport using a range of emissions measurement and modelling tools. Examples of recent research projects include: quantifying real-world vehicle emissions; using real-world vehicle emissions data to improve emissions models; evaluating economic and environmental benefits of Kinetic Energy Recovery Systems (KERS) for road freight; and quantifying aircraft emissions at airports. Marc is a member of the LoCITY ‘Policy, Procurement, Planning and Practice’ working group and the EQUA Air Quality Index Advisory Board. 
\end{IEEEbiography}
\vskip -2\baselineskip plus -1fil
\begin{IEEEbiography}[{\includegraphics[width=1in,height=1.25in,clip,keepaspectratio]{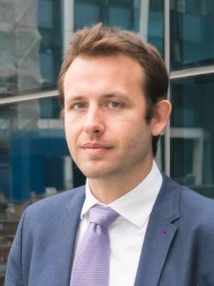}}]{Panagiotis Angeloudis}
is a Reader and Director of the Transport Systems and Logistics Laboratory, part of the Centre for Transport Studies and the Department of Civil \& Environmental Engineering at Imperial College London. He received an MEng in Civil \& Environmental Engineering in 2005 and a PhD in Transport Operations in 2009, both from  Imperial College London. 

His research interests lie on the field of transport systems and networks operations, with a focus on the the efficient and reliable movement of people and goods across land, sea and water. He was recently appointed by the UK Department for Transport to the Expert Panel for Maritime 2050 and was a member of the UK Government Office of Science Future of Mobility review team. He is affiliated with the Centre for Systems Engineering and Innovation, the Institute for Security Science and Technology, the Grantham Institute and the Imperial Robotics Forum.
\end{IEEEbiography}

\end{document}